\newtheorem{theo}{Theorem}
\newtheorem{coro}{Corollary}
\def\dint{\displaystyle \int}%
\def\dsum{\mathop{\displaystyle \sum }}%
\begin{document}

\date{\thistime,\,\today}
\title{Multi-user Precoding and Channel Estimation for Hybrid Millimeter Wave Systems}
\author{Lou~Zhao, \IEEEmembership{Student Member,~IEEE}, Derrick~Wing~Kwan~Ng, \IEEEmembership{Member,~IEEE}, \\
and Jinhong Yuan, \IEEEmembership{Fellow,~IEEE}
\thanks{%
Part of this manuscript has been accepted for presentation at the IEEE ICC 2017 \cite{Zhao2017}.
L. Zhao, D.~W.~K.~Ng, and J. Yuan are with the School of Electrical
Engineering and Telecommunications, The University of New South Wales,
Sydney, Australia (Email: lou.zhao@unsw.edu.au; w.k.ng@unsw.edu.au;
j.yuan@unsw.edu.au). }}
\maketitle

\begin{abstract}
In this paper, we develop a low-complexity channel estimation for hybrid millimeter wave (mmWave) systems, where the number of radio frequency (RF) chains is much less than the number of antennas equipped at each transceiver.
The proposed mmWave channel estimation algorithm first exploits multiple frequency tones to estimate the strongest angle-of-arrivals (AoAs) at both base station (BS) and user sides for the design of analog beamforming matrices.
Then all the users transmit orthogonal pilot symbols to the BS along the directions of the estimated strongest AoAs in order to estimate the channel.
The estimated channel will be adopted to design the digital zero-forcing (ZF) precoder at the BS for the multi-user downlink transmission.
The proposed channel estimation algorithm is applicable to both non-sparse and sparse mmWave channel environments.
Furthermore, we derive a tight achievable rate upper bound of the digital ZF precoding with the proposed channel estimation algorithm scheme.
Our analytical and simulation results show that the proposed scheme obtains a considerable achievable rate of fully digital systems, where the number of RF chains equipped at each transceiver is equal to the number of antennas.
Besides, by taking into account the effect of various types of errors, i.e., random phase errors, transceiver analog beamforming errors, and equivalent channel estimation errors, we derive a closed-form approximation for the achievable rate of the considered scheme.
We illustrate the robustness of the proposed channel estimation and multi-user downlink precoding scheme against the system imperfection.

\end{abstract}
\renewcommand{\baselinestretch}{0.99}
\large\normalsize

\vspace*{-0mm}
\begin{IEEEkeywords}
Millimeter wave, hybrid systems, channel estimation, zero-forcing precoding, hardware impairment.
\end{IEEEkeywords}

\section{Introduction}
\vspace*{-0mm}

Higher data rates, larger bandwidth, and higher spectral efficiency are necessary for the fifth-generation (5G) wireless communication systems to support various emerging applications. The combination of millimeter wave (mmWave) communication \cite{AZhang2015,Kokshoorn2016,Elkashlan2014,Sohrabi2016} with massive multiple-input multiple-output (MIMO) \cite{Yang2015,Bogale2015,Deng2015,Marzetta2010,Yang2015bb,Dai2013} is considered as one of the promising candidate technologies for 5G communication systems with many potential opportunities for research \cite{Ng2017,Bogale2015,Rappaport2015}.

Communication in mmWave band (frequency ranges from $30$ GHz to $300$ GHz) was not widely applied to cellular systems due to the inherent high propagation path loss, low penetration coefficients, and high signal attenuation caused by raindrop absorption.
Recently, mmWave has attracted growing interests from both academia and industry.
In particular, it is considered as one of important technologies for dynamic micro-cell or pico-cell (IEEE 802.11ad) systems with small coverage since it provides a tremendous spectrum which is available throughout the world \cite{Heath2016a}.
On the other hand, massive MIMO systems, equipping a base station (BS) with hundreds of antennas, can be exploited to serve tens of users simultaneously to enhance the system spectral efficiency.
{More importantly, low computational complexity linear precoding schemes, such as maximum-ratio combining (MRC)/maximum-ratio transmission (MRT) and zero-forcing (ZF), can be deployed and to achieve high data rates due to a large amount of spatial degrees of freedom \cite{Alkhateeb2015,Dai2015,Marzetta2010} available in massive MIMO systems.
Though dirty paper coding (DPC) can pre-cancel known interference without power penalty, the non-linear algorithm may not be suitable for implementing in practical systems due to its high computational  complexity.}
Therefore, massive MIMO systems operating at mmWave band is expected to provide many potential advantages and exciting opportunities for future research\cite{Alkhateeb2015,Swindlehurst2014,Bjornson2016}.
In fact, there are plenty of implementation challenges for mmWave massive MIMO communication systems.
For example, the trade-offs between system performance, hardware complexity\footnote{The hardware includes power amplifier (PA), analog digital converter/digital analog converter (ADC/DAC), phase shifters, and antenna array.}, and energy consumption \cite{AZhang2015,Heath2016a} are still unclear.
From the literature, it is certain that conventional fully digital MIMO systems, in which each antenna connects with a dedicated radio frequency (RF) chain, are impractical for mmWave systems due to the prohibitively high cost, e.g. tremendous energy consumption of high resolution ADC/DACs and PAs.
Therefore, several mmWave hybrid systems were proposed as compromised solutions which strike a balance between hardware complexity and system performance \cite{Alkhateeb2015,Ni2016,Han2015,Ayach2014,Sohrabi2016,Gao2016,Bjornson2016}.
Specifically, the use of a large number of antennas, connected with only a small number of independent RF chains at transceivers, is adopted to exploit the large array gain to compensate the inherent high path loss in mmWave channels \cite{Rappaport2015,Hur2016}.
Yet, the hybrid system imposes a restriction on the number of RF chains which introduces a paradigm shift in the design of both resource allocation algorithms and transceiver signal processing.

Conventionally, pilot-aided channel estimation algorithms are widely adopted for fully digital time-division duplex (TDD) massive MIMO systems \cite{Marzetta2010,Shen2015} operating in sub-$6$ GHz frequency bands.
However, these algorithms cannot be  directly applied to hybrid mmWave systems as the number of RF chains is much smaller than the number of antennas.
In fact, for the channel estimation in hybrid mmWave systems, the strategies of allocating analog/digital beams to different users and estimating the equivalent baseband channels are still an open area of research \cite{Alkhateeb2015,Prelcic2016}.
In addition, a channel estimation algorithm designed for a specific type of hybrid system may not be applicable to other hybrid systems \cite{Alkhat2014}.
For instance, in the multi-user (MU) channel estimation with analog beam training approaches, open-loop beamforming (OLB) channel estimation is widely adopted \cite{Bjornson2016}.
However, OLB channel estimation is not suitable for large scale antenna arrays, as the required amount of feedback bits scale with the number of transmit antennas.
Recently, several improved mmWave channel estimation algorithms were proposed \cite{Kokshoorn2016,Alkhat2014,Alkhateeb2015}.
The overlapped beam patterns and rate adaptation channel estimation were investigated in \cite{Kokshoorn2016} to reduce the required training time for channel estimation.
Then, the improved limited feedback channel estimation was proposed \cite{Alkhateeb2015} to maximize the received signal power at each single user so as to reduce the required training and feedback overheads.
{However, explicit channel state information (CSI) feedback from users is still required for these channel estimation algorithms. In practice, CSI feedback may cause high complexity and extra signallings overhead.
In addition, there will be a system rate performance degradation due to the limited amount of the feedback and the limited resolution of CSI quantization.
Therefore, a low computational complexity mmWave channel estimation algorithm, which does not require explicit CSI feedback, is necessary to unlock the potential of hybrid mmWave systems.}

In the literature, most of the existing mmWave channel estimation algorithms leverage the sparsity of mmWave channels due to the extremely short wavelength of mmWave \cite{Heath2016a,Andrews2016}.
Generally, in suburban area or outdoor long distance propagation environment \cite{Hur2016}, the sparsity of mmWave channels can be well exploited.
In practical urban area (especially in the city center), the number of unexpected scattering clusters increases significantly and mmWave communication channels may not be necessarily sparse.
For instance, in the field measurements in Daejeon city, Korea, and the associated ray-tracing simulation \cite{Hur2016}, the angle of arrivals (AoAs) at the BS and the users were observed under the impact of non-negligible scattering clusters.
Hence, some preliminary works \cite{Ayach2014,Liang2014} have started to investigate the clustered channel model for modeling MIMO propagation at mmWave carrier frequencies \cite{Buzzi2016b}.
In addition, existing mmWave channel estimation algorithms \cite{Kokshoorn2016,Alkhat2014,Alkhateeb2015}, which are designed based on the assumption of channel sparsity, may not be suitable for non-sparse mmWave channels.
Indeed, the scattering clusters of mmWave channels due to macro-objects or backscattering from the objects, have a significant impact on system performance and cannot be neglected in the system design.
Therefore, there is an emerging need for a channel estimation algorithm which is applicable to both non-sparse and sparse mmWave channels.

Motivated by aforementioned discussions, we consider a MU hybrid mmWave system.
In particular, we propose and detail a novel non-feedback non-iterative channel estimation algorithm which is applicable to both non-sparse and sparse mmWave channels.
Also, we analyze the achievable rate performance of the mmWave system using digital ZF precoding based on the estimated channel information.
Furthermore, we analyze the performance degradation under some practical hardware imperfections, such as random phase errors, RF transceiver beamforming errors, and channel estimation errors.
Our main contributions are summarized as follows:

\begin{itemize}

\item We propose a three-step MU channel estimation scheme for mmWave channels.
In the first two steps, we estimate the strongest AoAs at both the BS and the users sides instead of estimating the combination of multiple AoAs.
The estimated strongest AoAs will be exploited for the design of analog beamforming matrices at the BS and users.
In the third step, all the users transmit orthogonal pilot symbols to the BS along the beamforming paths of the strongest AoA directions to facilitate the equivalent channel estimation, which will be exploited to design the digital ZF precoder at the BS for the downlink transmission.
Our proposed hybrid scheme can suppress the downlink MU interference effectively via its analog beamforming and digital precoder.
Firstly, the proposed analog beamforming allow signal transmission and reception along the strongest AoA direction, which reduces the interference outside the strongest AoA directions and utilizes the transmission power more efficiently.
Secondly, the digital ZF precoder can suppress the MU interference within the strongest AoA directions.

\item We analyze the achievable rate performance of the proposed scheme based on the estimated equivalent channel CSI, analog beamforming matrices, and digital ZF precoding.
While assuming the equivalent CSI is perfectly known at the BS, we derive a tight performance upper bound on the achievable rate of our proposed scheme.
Also, we quantify the performance gap between the proposed hybrid scheme and the fully digital system in terms of achievable rate per user.
It is interesting to note that the performance gap is determined by the ratio between the power of the strongest AoA component and the power of the scattering component, Rician K-factor $\upsilon$.
The performance gap of the average achievable rate per user between the hybrid system and the fully digital system is only $\left\vert\log_2 \left( \frac{\upsilon }{\upsilon +1}\right)\right\vert$ bits/s/Hz in the large numbers of antennas regime.

\item We further analyze the system performance degradation and derive the closed-form approximation of achievable rate under various types of errors, i.e., random phase errors, transceiver analog beamforming errors, and equivalent channel estimation errors, in the high receiver signal-to-noise ratio (SNR) and the large numbers of antennas regimes.
Interestingly, our results confirm that the impact of phase errors and transceiver beamforming errors will not cause a performance ceiling in terms of achievable rate.
Besides, the performance gap in terms of the achievable rate between the system under phase errors and transceiver beamforming errors and the system with perfect hardware is approximated.

\end{itemize}

Notation: $\mathrm{E}_{\mathrm{h}}(\cdot )$ denote statistical expectation operation with respect to random variable $h$, $\mathbb{C}^{M\times N}$ denotes the space of all $M\times N$ matrices with complex entries; $(\cdot )^{-1}$ denotes inverse operation; $(\cdot )^{H}$ denotes Hermitian transpose; $(\cdot )^{\ast }$ denotes complex conjugate; $(\cdot)^{T}$ denotes transpose; $|\cdot |$ denote the absolute value of a complex scalar; $\mathrm{tr}(\cdot )$ denotes trace operation; $\|\cdot\|_{\mathrm{F}}$ denotes the Frobenius norm of matrix; $\lambda _{i}(\cdot )$ denotes the $i$-th maximum eigenvalue of a matrix; $\mathrm{diag}\left\{a\right\} $ is a diagonal matrix with the entries $a$ on its diagonal.
The distribution of a circularly symmetric complex Gaussian (CSCG) random vector with a mean vector $\mathbf{x}$ and a covariance matrix ${\sigma}^{2}\mathbf{I}$  is denoted by ${\cal CN}(\mathbf{x},{\sigma}^{2}\mathbf{I})$, and $\sim$ means ``distributed
as".  

\vspace*{-1mm}
\section{System Model}

We consider a MU hybrid mmWave system which consists of one base station (BS) and $N$ users in a single cell, as shown in Figure \ref{fig:hybrid}.
{Generally, there are two kind of hybrid structures which are widely adopted by researchers \cite{AZhang2015,Alkhateeb2015}: the full access hybrid architecture and the subarray hybrid architecture. The full access hybrid architecture, where each RF chain is connected to all the antennas, can provide a higher array gain and a narrower beam width than that of the subarray hybrid architecture, where each RF chain is connected to a part of the antennas. In this paper, we adopt the full access hybrid architecture since it offers higher flexibility in the design of channel estimation algorithm.}
We assume that the BS is equipped with $M\geq 1$ antennas and $N_{\mathrm{RF}}$  radio frequency (RF) chains to serve the $N$ users.
{Besides, each user is equipped with $P$ antennas and a single RF chain. We also assume that $M\geqslant N_{\mathrm{RF}}\geqslant N$.}
In the following sections, we set $N = N_{\mathrm{RF}}$ to simplify the analysis\footnote{We note that our proposed channel estimation scheme, precoding scheme, and analysis can be generalized to the case of $N_{\mathrm{RF}}\geq N$,  at the expense of a more involved notation.}.
Each RF chain at the BS can access to all the antennas by using $M$ phase shifters, as shown in Figure \ref{fig:RF_chain}.
At each BS, the number of phase shifters is $M\times N_{\mathrm{RF}}$.
Due to significant propagation attenuation at mmWave frequency, the system is dedicated to cover a small area, e.g. cell radius is $\sim 150$ m.
We assume that the users and the BS are fully synchronized and time division duplex (TDD) is adopted to facilitate uplink and downlink communications \cite{Marzetta2010}.
\begin{figure}[t]\vspace{-0mm}
\centering \includegraphics[width=3.5in]{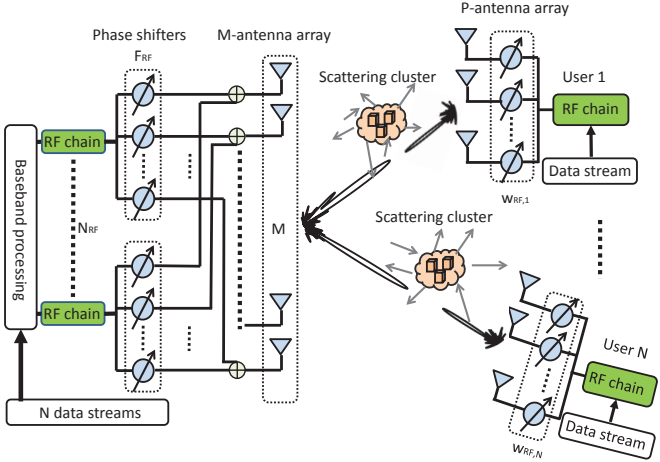}
\vspace{-5mm}
\caption{A mmWave massive MIMO communication system with a hybrid system of
transceivers.}
\label{fig:hybrid}\vspace{-4mm}
\end{figure}

In previous work \cite{Alkhateeb2015}, mmWave channels were assumed to have sparse propagation paths between the BS and the users.
{Yet, in recent field tests, especially in the urban microcell environments, both a strong line-of-sight (LOS) component and non-negligible scattering components may exist in mmWave propagation channels \cite{Buzzi2016b,Rappaport2015,Hur2016}.
Therefore, for the urban short-distance propagation environment, mmWave channels are more suitable to be modeled by non-sparse Rician fading and with a large Rician K-factor \cite{Hur2016,Rappaport2015,Al-Daher2012}.
On the other hand, for the suburban long-distance propagation environment, mmWave channels can be modeled by a sparse channel model. The reason is that scattering components will vanish during the long-distance propagation because of the high reflection loss and large propagation path loss.}
{In addition, the blockage of LOS component is critical for mmWave systems and widely considered in previous works \cite{Bai2015b,Andrews2016}. In \cite{Bai2015b}, the authors proved that the optimal cell size to achieve the maximum SINR scales with the average size of the area that is LOS to a user. Therefore, with a properly designed radius of cells, one should expect the existence of at least one LOS component from the BSs to any user. When the LOS component from a certain BS to a user is blocked, the user can still exploit other existing LOS components from other BSs for channel estimation and data transmission.}
\begin{figure}[t]
\centering\vspace*{-0mm}
\includegraphics[height=1.4in]{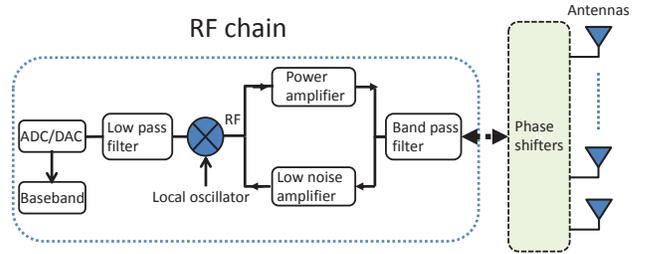}
\vspace{-1mm}
\caption{A block diagram of a RF chain for an antenna array.}
\label{fig:RF_chain}\vspace{-5mm}
\end{figure}
Let $\mathbf{H}_{k}\in\mathbb{C}^{M\times P}$ be the uplink channel matrix between user $k$ and the BS in the cell.
We assume that $\mathbf{H}_{k}$ is a slow time-varying block Rician fading channel, i.e., the channel is constant in a block but varies slowly from one block to another.
Then, in this paper, we assume that the channel matrix $\mathbf{H}_k$ can be decomposed into a deterministic LOS channel matrix $\mathbf{H}_{\mathrm{L},k}\in\mathbb{C}^{M\times P}$ and a scattered channel matrix $\mathbf{H}_{\mathrm{S,}k}\in\mathbb{C}^{M\times P}$ \cite{Buzzi2016b}, i.e., \vspace*{-0mm}
\begin{equation}\label{eqn:LOS_channel}
\vspace*{-1mm}
\mathbf{H}_{k}=\underset{\mathrm{LOS}\text{\ }\mathrm{component}}{\underbrace{\mathbf{H}_{\mathrm{L,}k}\mathbf{G}_{\mathrm{L,}k}}}+\underset{\mathrm{Scattering}\text{\ }\mathrm{component}}{\underbrace{\mathbf{H}_{\mathrm{S,}k}\mathbf{G}_{\mathrm{S,}k}}},\vspace*{-1mm}
\end{equation}\vspace*{-0mm}%
where $\mathbf{G}_{\mathrm{L,}k}\in\mathbb{C}^{P\times P}$ and $\mathbf{G}_{\mathrm{S},k}\in\mathbb{C}^{P\times P}$ are diagonal matrices with entries
\begin{equation}
\vspace*{-1mm}
\mathbf{G}_{\mathrm{L,}k}=\mathrm{diag}\left\{ \sqrt{\frac{\upsilon _{k}}{\upsilon _{k}+1}}\right\} \text{ and } \mathbf{G}_{\mathrm{S,}k}=\mathrm{diag}\left\{ \sqrt{\frac{1}{\upsilon _{k}+1}}\right\},\vspace*{-1mm}
\end{equation}%
respectively, and $\upsilon _{k}>0$ is the Rician K-factor of user $k$.
{{Besides, Equations (1) and (2) are the generalization of mmWave channel models, which capture both the scattered and non-scattered components.}
In general, we can adopt different array structures, e.g. uniform linear array (ULA) and uniform panel array (UPA) for both the BS and the users.
Here, we adopt the ULA as it is commonly implemented in practice \cite{Alkhateeb2015}.
We assume that all the users are separated by hundreds of wavelengths or more \cite{Marzetta2010}.
Thus, we can express the deterministic LOS channel matrix $\mathbf{H}_{\mathrm{L},k}$ of user $k$ as \cite{book:wireless_comm}\vspace*{-0mm}
\begin{equation}
\mathbf{H}_{\mathrm{L,}k}=\mathbf{h}_{\mathrm{L,}k}^{\mathrm{BS}}\mathbf{h}_{%
\mathrm{L,}k}^{H},
\end{equation}\vspace*{-0mm}%
where $\mathbf{h}_{\mathrm{L},k}^{\mathrm{BS}}$ $\in\mathbb{C}^{M\times 1}$ and $\mathbf{h}_{\mathrm{L,}k}$ $\in\mathbb{C}^{P\times 1}$ are the antenna array response vectors of the BS and user $k$ respectively.

In particular, $\mathbf{h}_{\mathrm{L,}k}^{\mathrm{BS}}$ and $\mathbf{h}_{\mathrm{L,}k}$ can be expressed as \cite{book:wireless_comm,Trees2002}\vspace*{-1mm}
\begin{align}
\vspace*{-1mm}
\mathbf{h}_{\mathrm{L},k}^{\mathrm{BS}}& =\left[
\begin{array}{ccc}
1, & \ldots
, & \text{ }e^{-j2\pi \left( M-1\right) \tfrac{d}{\lambda }\cos \left(
\theta _{k}\right) }%
\end{array}%
\right] ^{T} \text{and} \\
\mathbf{h}_{\mathrm{L},k}& =\left[
\begin{array}{ccc}
1, & \ldots ,
& \text{ }e^{-j2\pi \left( M-1\right) \tfrac{d}{\lambda }\cos \left( \phi
_{k}\right) }%
\end{array}%
\right] ^{T},\vspace*{-1mm}
\end{align}\vspace*{-0mm}
respectively, where $d$ is the distance between the neighboring antennas and $\lambda $ is the wavelength of the carrier frequency.
Variables $\theta _{k}\in \left[ 0,+\pi \right]$ and $\phi _{k}\in \left[ 0,+\pi \right] $ are the angles of incidence of the LOS path at antenna arrays of the BS and user $k$, respectively.
For convenience, we set $d=\frac{\lambda }{2}$ for the rest of the paper which is an assumption commonly adopted in the literature \cite{Trees2002,book:wireless_comm}.
Without loss of generality, we assume that the scattering component $\mathbf{H}_{\mathrm{S,}k}$ consists $N_{\mathrm{cl}}$ clusters and each cluster contributes $N_{\mathrm{l},i}$ propagation paths \cite{Buzzi2016b}, which can be expressed as\vspace*{-1mm}
\begin{align}
\vspace*{-0mm}
\mathbf{H}_{\mathrm{S,}k}&=\sqrt{\tfrac{1}{{\sum }_{i=1}^{N_{\mathrm{cl}}}{N_{\mathrm{l},i}}}}\overset{N_{\mathrm{cl}}}{\underset{i=1}{\dsum }}\overset{N_{\mathrm{l},i}}{\underset{l=1}{\dsum }}{\alpha _{i,l}}\mathbf{h}_{i,l}^{\mathrm{BS}}\mathbf{h}_{k,i,l}^{H}\notag\\
&= \left[  \begin{array}{ccccc} \mathbf{h}_{\mathrm{S},1}, &\ldots,& \mathbf{h}_{\mathrm{S},k}, &\ldots,& \mathbf{h}_{\mathrm{S},P} \end{array}\right],\vspace*{-1mm}
\end{align}\vspace*{-0mm}%
where $\mathbf{h}_{i,l}^{\mathrm{BS}}\in\mathbb{C}^{M\times 1}$ and $\mathbf{h}_{k,i,l}\in\mathbb{C}^{P\times 1}$ are the antenna array response vectors of the BS and user $k$ associated to the $\left( i,l\right) $-th propagation path, respectively.
Here $\alpha _{i,l}\sim \mathcal{CN}\left( 0,1\right) $ represents the path attenuation of the $\left(i,l\right)$-th propagation path and $\mathbf{h}_{\mathrm{S},k}\in\mathbb{C}^{M\times 1}$ is the $k$-th column vector of $\mathbf{H}_{\mathrm{S},k}$.
With the increasing number of clusters, the path attenuation coefficients and the AoAs between the users and the BS become randomly distributed \cite{Buzzi2016b,Hur2016}.
Therefore, we model the entries of scattering component $\mathbf{H}_{\mathrm{S,}k}$ in a general manner as an independent and identically distributed (i.i.d.) random variable\footnote{To facilitate the study of the downlink hybrid precoding, we assume that perfect long-term power control is performed to compensate for path loss and shadowing at the desired users and equal power allocation among different data streams of the users\cite{Alkhateeb2015,Ni2016,Yang2013}. Thus, the entries of scattering component $\mathbf{H}_{\mathrm{S,}k}$ are modeled by i.i.d. random variables. } $\mathcal{CN}\left( 0,1\right)$.
\vspace*{-0mm}
\section{Proposed Channel Estimation for Hybrid System}
\begin{figure}[t]
\centering\vspace{-0mm}
\includegraphics[width=3.4in]{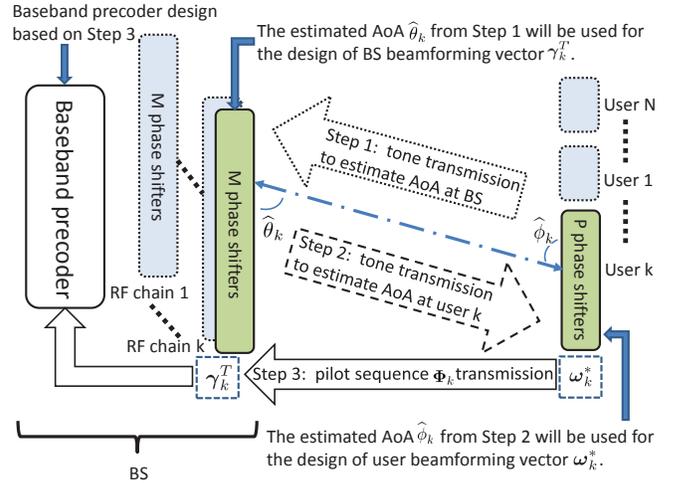}\vspace{-0mm}
\caption{An illustration of the proposed channel estimation algorithm for hybrid mmWave systems.}
\label{fig:CEI}\vspace*{-3mm}
\end{figure}

\begin{algorithm}
\caption{Channel Estimation Algorithm for Hybrid Systems}
\label{a1}
\begin{algorithmic} [1]
\REQUIRE Multiple single-carrier frequency tone signals $[f_1, \text{}\ldots, f_N]$, pilot sequences matrix $\mathbf{\Psi }$,
         the detection matrices $[\mathbf{\Gamma}_1, \text{}\ldots, \mathbf{\Gamma}_N] $ of AoA at the BS,
         and the detection matrices $[\mathbf{\Omega}_1, \text{}\ldots, \mathbf{\Omega}_N]$ of AoA at the users\\\vspace*{+2mm}
{{ STEP 1: \underline{Estimate the AoA at the BS and}}}\\
{{ \quad \quad \quad \quad \underline{design BS analog beamforming matrix}}}\vspace*{+1mm}
\STATE  The estimation of AoA at the BS: all the users transmit their unique frequency tones by using only one omni-directional antenna
\STATE  The BS calculates $r_{k,i}^{\mathrm{BS}}, \text{\ } i\in \{1,\ldots,J\}$, as shown in (\ref{AoA_D}) to estimate the uplink AoA of user $k$ and its corresponding beamforming vector:
{$\widetilde{\mathbf{\gamma}}_{k}=\underset{\forall \gamma_{k,i}, \text{\ }i\in \{1,\ldots,J\}}{\arg \max \left\vert r_{k,i}^{\mathrm{BS}}\right\vert }$}
\STATE The BS analog beamforming matrix is \\
$\mathbf{F}_{\mathrm{RF}}=\left[\begin{array}{ccc}\widetilde{\bm{\gamma}}_{1},\ldots, \widetilde{\bm{\gamma}}_{N}\end{array}\right]$\\\vspace*{+2mm}
{{ STEP 2: \underline{Estimate the AoA at the users and}}}\\
{{\quad \quad \quad \quad \underline{design user analog beamforming matrix}}}\vspace*{+1mm}
\STATE  The estimation of AoA at the users: the BS transmits frequency tones back to all the users using $\mathbf{F}_{\mathrm{RF}}$ as a transmit beamforming matrix
\STATE  Calculate $r _{k,i}^{\mathrm{UE}},\text{\ } i\in \{1,\ldots,J\}$, as shown in (\ref{AoA_D2}) to estimate the downlink AoA of user $k$ and its corresponding beamforming vector: $\widetilde{\mathbf{\omega }}_{k}^{\ast }=\underset{\forall \mathbf{\omega }_{k,i}, \text{\ }i\in \{1,\ldots,J\}}{\arg \max \left\vert \gamma
_{k,i}^{\mathrm{UE}}\right\vert }$
\STATE The users analog beamforming matrix is \\
$\mathbf{Q}_{\mathrm{RF}}=\left[
\begin{array}{ccc}
\widetilde{\mathbf{\omega }}_{1}^{\ast}, \ldots, \widetilde{\mathbf{\omega}}_{N}^{\ast }\end{array}\right] $\\ \vspace*{+2mm}
{{STEP 3: \underline{Estimate equivalent channel and}}}\\
{{ \quad \quad \quad \quad \underline{design digital ZF precoder}}}\vspace*{+1mm}
\STATE All the users transmit orthogonal pilot sequences by using $\mathbf{Q}_{\mathrm{RF}}$ as beamforming matrix and the BS uses $\mathbf{F}^{T}_{\mathrm{RF}}$ as beamforming matrix to receive pilot sequences
\STATE The BS obtains and calculates $\widehat{\mathbf{H}}_{\mathrm{eq}}^{T}$ as shown in (\ref{EHC_1}) $\widehat{\mathbf{H}}_{\mathrm{eq}}^{T}=\mathbf{\Psi }^{H}\left[\begin{array}{ccccc}\mathbf{s}_{1},\text{}\ldots ,\mathbf{s}_{N}\end{array}\right]$\\
\STATE The BS sets the baseband digital ZF precoder as \\ $\overline{\mathbf{W}}_{\mathrm{eq}}=\widehat{\mathbf{H}}_{\mathrm{eq}}^{\ast }(\widehat{\mathbf{H}}_{\mathrm{eq}}^{T}\widehat{\mathbf{H}}_{\mathrm{eq}}^{\ast })^{-1}$
\end{algorithmic}
\end{algorithm}
In this section, we propose and detail our mmWave channel estimation for hybrid mmWave systems.
In practice, the hybrid system imposes a fundamental challenge for mmWave channel estimation.
Unfortunately, the conventional pilot-aided channel estimation algorithm for fully digital systems, e.g. \cite{Kokshoorn2016,Alkhateeb2015}, is not applicable to the considered hybrid mmWave system.
The reasons are that the number of RF chains is much smaller than the number of antennas equipped at the BS and the transceiver beamforming matrix cannot be acquired.

To address this important issue, we propose a novel channel estimation algorithm, which contains three steps as shown in Figure \ref{fig:CEI} and Algorithm \ref{a1}.
{In the first and second steps, we introduce unique unmodulated frequency tones to estimate the strongest AoAs at the BS and user sides. The unique frequency tones and linear search algorithm are inspired by signal processing in monopulse passive electronically scanned array (PESA) radar and sonar systems \cite{book:wireless_comm}.}
These estimated strongest AoAs will be exploited to develop analog transmit and receive beamforming matrices at the BS and users.
In the third step, the users transmit orthogonal pilot symbols to the BS along the beamforming paths in order to estimate the equivalent channel via the strongest AoA directions.
Then, the estimated channel will be used for the design of BS digital baseband precoder for the downlink transmissions by exploiting the reciprocity between the uplink and downlink channels.

\vspace*{-1mm}
\subsection{Details of Proposed Channel Estimation}

\quad\emph{Step $1$}, \underline{Line 1 in Algorithm \ref{a1}:} Firstly, all the users transmit unique frequency tones to the desired BS in the uplink simultaneously.
For user $k$, an unique unmodulated frequency tone, $x_{k}=\cos \left( 2\pi f_{k}t\right), k\in \{1,\cdots ,N\}$, is transmitted from one of the omni-directional antennas in the antenna array to the BS.
Here, $f_{k}$ is the single carrier frequency and $t$ stands for time and $f_{k}\neq f_{j},  \forall k\neq j$.
For the AoA estimation, if the condition $\frac{f_{k}-f_{j}}{f_\mathrm{c}} < 10^{-4}, \text{ }\forall k\neq j$, is satisfied, the AoA estimation difference by using different tones is generally negligible \cite{Trees2002}, where $f_\mathrm{c}$ is the system carrier frequency.
The pass-band received signal of user $k$ at the BS, $\mathbf{y}_{k}^{\mathrm{BS}}$, is given by\vspace*{-1mm}
\begin{equation}
\vspace*{-1mm}
\mathbf{y}_{k}^{\mathrm{BS}}=\left( \sqrt{\dfrac{\upsilon _{k}}{%
\upsilon _{k}+1}}\mathbf{h}_{\mathrm{L},k}^{\mathrm{BS}}+\sqrt{\dfrac{1}{%
\upsilon _{k}+1}}\mathbf{h}_{\mathrm{S},k}\right) x_{k}+\mathbf{z}_{\mathrm{%
BS}}, \vspace*{-0mm}
\end{equation}\vspace*{-0mm}%
where $\mathbf{z}_{\mathrm{BS}}$ denotes the thermal noise at the antenna array of the BS, $\mathbf{z}_{\mathrm{BS}}\sim\mathcal{CN}\left( \mathbf{0},{\sigma_{\mathrm{BS}}^{2}}\mathbf{I} \right)$, and ${\sigma} _{\mathrm{BS}}^{2}$ is the noise variance at each antenna of the BS.
To facilitate the estimation of AoA, we perform a linear search in the angular domain ranged from $0^{\circ}$ to $180^{\circ}$ with an angle search step size of $\frac{180}{J}$.
{Therefore, the AoA detection matrix $\mathbf{\Gamma }_{k}\in\mathbb{C}^{M\times J}$, $\mathbf{\Gamma }_{k}= \left[ \begin{array}{ccc} \bm{\gamma }_{k,1},\ldots,\bm{\gamma }_{k,J} \end{array}\right]$, contains $J$ column vectors. In general, the typical value of the minimum search steps $J$ depends on the number of antennas $M$ used for the AoA search. In particular, $J\approx\frac{2M}{1.782}$ \cite{Trees2002}.}
The $i$-th vector $\bm{\gamma }_{k,i}\in\mathbb{C}^{M\times 1}, i\in \{1,\cdots,J\}$, stands for a potential AoA of user $k$ at the BS and is given by\vspace*{-0mm}
\begin{equation}
\vspace*{-1mm}
\mathbf{\gamma}_{k,i}=
\frac{1}{\sqrt{M}}\left[
\begin{array}{ccc}
1, & \ldots , & \text{ }e^{j2\pi \left( M-1\right) \tfrac{d}{\lambda }\cos
\left( \widehat{\theta }_{i}\right) }
\end{array}
\right] ^{T},
\end{equation}\vspace*{-0mm}%
where $\widehat{\theta }_{i}= \left(i-1\right)\frac{180}{J}, i\in \{1,\cdots,J\}$, is the assumed AoA and $\bm{\gamma }_{k,i}^{H}\bm{\gamma }_{k,i}=1$.
For the AoA estimation of user $k$, $\mathbf{\Gamma }_{k}$ is implemented in the $M$ phase shifters connected by the $k$-th RF chain.
The local oscillator (LO) of the $k$-th RF chain at the BS generates the same carrier frequency $f_{k}$ to down convert the received signals to the baseband, as shown in Figure \ref{fig:RF_chain}.
After the down-conversion, the signals will be filtered by a low-pass filter which can remove other frequency tones.
The equivalent received signal at the BS from user $k$ at the $i$-th potential AoA is given by\vspace*{-1mm}
\begin{equation}
\vspace*{-1mm}
r_{k,i}^{\mathrm{BS}}= \sqrt{\dfrac{\upsilon _{k}}{\upsilon
_{k}+1}}\bm{\gamma }_{k,i}^{T}\mathbf{h}_{\mathrm{L},k}^{\mathrm{BS}}+\sqrt{\dfrac{1}{\upsilon _{k}+1}}\bm{\gamma }_{k,i}^{T}\mathbf{h}_{\mathrm{S},k} +\bm{\gamma }_{k,i}^{T}\mathbf{z}_{\mathrm{BS}\vspace*{-1mm}
}.  \label{AoA_D}
\end{equation}\vspace*{-0mm}
The potential AoA, which leads to the maximum value among the $J$ observation directions, i.e.,\vspace*{-1mm}%
\begin{equation}
\vspace*{-1mm}
\widetilde{\mathbf{\gamma}}_{k}=\underset{\forall \gamma_{k,i}, \text{\ }i\in \{1,\cdots,J\}}{\arg \max \left\vert r_{k,i}^{\mathrm{BS}}\right\vert },  \label{RFBE}\vspace*{-1mm}
\end{equation}\vspace*{-0mm}%
is considered as the strongest AoA of user $k$.
{In addition, the strongest AoA estimation shown in Equations (\ref{AoA_D}) and (\ref{RFBE}) can be performed by using either a series of analog comparators in analog domain or digital buffer in digital domain.}
Besides, vector $\widetilde{\mathbf{\gamma}}_{k}$ corresponding to the AoA with the maximum value in (\ref{RFBE}) will be exploited for the design of the analog beamforming vector of user $k$ at the BS.
As a result, we can also estimate all other users' uplink AoAs at the BS from their corresponding transmitted signals simultaneously.
For notational simplicity, we denote $\mathbf{F}_{\mathrm{RF}}=\left[\begin{array}{ccc}\widetilde{\mathbf{\gamma}}_{1},\ldots, \widetilde{\mathbf{\gamma}}_{N}\end{array}\right] \in\mathbb{C}^{M\times N}$ as the BS analog beamforming matrix.

\quad\emph{Step $2$}, \underline{Line 4 in Algorithm \ref{a1}:} The BS sends unique frequency tones to all the users exploiting analog beamforming matrix\footnote{This procedure can be done simultaneously in all the RF chains for all the users.} $\mathbf{F}_{\mathrm{RF}}$ obtained in Step $1$.
This facilitates the downlink AoAs estimation at the users and this AoA information will be used to design analog beamforming vectors to be adopted at the users.
The received signal $\mathbf{y}_{k}^{\mathrm{UE}}$ at user $k$ can be expressed as\vspace*{-0.5mm}
\begin{equation}
\vspace*{-0.5mm}
\mathbf{y}_{k}^{\mathrm{UE}}=\left[ \mathbf{G}_{\mathrm{L,}k}\mathbf{h}_{\mathrm{L,}k}^{\ast }\left(\mathbf{h}_{\mathrm{L},k}^{\mathrm{BS}}\right)^{T}+\mathbf{G}_{\mathrm{S,}k}\mathbf{H}_{\mathrm{S,}k}^{T}\right]\widetilde{\mathbf{\gamma }}_{k}x_{k} +\mathbf{z}_{\mathrm{MS}},\label{RFBF}\vspace*{-0mm}
\end{equation}\vspace*{-0mm}%
where $\mathbf{z}_{\mathrm{MS}}$ denotes the thermal noise at the antenna array of the users, $\mathbf{z}_{\mathrm{MS}}\sim \mathcal{CN}\left( \mathbf{0},{\sigma_{\mathrm{MS}}^{2}}\mathbf{I}\right),$ and ${\sigma} _{\mathrm{MS}}^{2}$ is the noise variance for all the users.

The AoA detection matrix for user $k$, $\mathbf{\Omega }_{k}\in\mathbb{C}^{P\times J}$, which also contains $J$ estimation column vectors, is implemented at phase shifters of user $k$.
The $i$-th column vector of matrix $\mathbf{\Omega }_{k}$ for user $k$, $\mathbf{\omega }_{k,i}\in\mathbb{C}^{P\times 1}, i\in \{1,\cdots,J\}$, is given by\vspace*{-1mm}%
\begin{equation}
\vspace*{-1mm}
\mathbf{\omega }_{k,i}=\frac{1}{\sqrt{P}}\left[
\begin{array}{ccc}
1,  & \ldots , & e^{j2\pi \left( P-1\right) \tfrac{d}{\lambda }\cos \left(\widehat{\phi }_{i}\right) }%
\end{array}
\right] ^{T},\vspace*{-0.5mm}
\end{equation}\vspace*{-0mm}%
where $\widehat{\phi }_{i}= \left(i-1\right)\frac{180}{J}, i\in \{1,\cdots,J\}$, is the $i$-th potential AoA of user $k$ and $\mathbf{\omega }_{k,i}^{H}\mathbf{\omega }_{k,i}=1$.
With similar  procedures as shown in Step $1$, the equivalent received signal from the BS at user $k$ of the $i$-th potential AoA is given by \vspace*{-0.5mm}
\begin{align}
\vspace*{-0.5mm}
r _{k,i}^{\mathrm{UE}}= &\mathbf{\omega }_{k,i}^{H}\sqrt{\dfrac{\upsilon _{k}}{\upsilon
_{k}+1}}\mathbf{h}_{\mathrm{L,}k}^{\ast }\left(\mathbf{h}_{\mathrm{L},k}^{\mathrm{BS}}\right)^{T}\widetilde{\mathbf{\gamma }}_{k} \notag \\
+&\mathbf{\omega}_{k,i}^{H}\sqrt{\dfrac{1}{\upsilon_{k}+1}}\mathbf{H}_{\mathrm{S,}k}^{T}\widetilde{\mathbf{\gamma }}_{k}+\mathbf{\omega }_{k,i}^{H}\mathbf{z}_{\mathrm{MS}}. \label{AoA_D2}\vspace*{-0.5mm}
\end{align}\vspace*{-0.0mm}%
Similarly, we search for the maximum value among $J$ observation directions and design the analog beamforming vector based on the estimated AoA of user $k$.
The beamforming vector for user $k$ is given by\vspace*{-0.5mm}
\begin{equation}
\vspace*{-1mm}
\widetilde{\mathbf{\omega }}_{k}^{\ast}=\underset{\forall \mathbf{\omega }_{k,i}, \text{\ } i\in \{1,\cdots,J\}}{\arg \max \left\vert r
_{k,i}^{\mathrm{UE}}\right\vert }\vspace*{-1mm}
\end{equation}\vspace*{-0.0mm}and we denote $\mathbf{Q}_{\mathrm{RF}}=\left[
\begin{array}{ccc}
\widetilde{\mathbf{\omega }}_{1}^{\ast}, \ldots, \widetilde{\mathbf{\omega }}_{N}^{\ast }\end{array}\right] \in\mathbb{C}^{P\times N}$ as the users analog beamforming matrix.

\quad\emph{Step $3$}, \underline{Line 7 in Algorithm \ref{a1}:} The BS and users analog beamforming matrices based on estimated uplink AoAs and downlink AoAs are designed via Step $1$ and Step $2$, respectively.
After that, all the users transmit orthogonal pilot sequences to the BS via user beamforming vectors $\widetilde{\mathbf{\omega }}_{k}^{\ast}$.
In the meanwhile, the BS receives pilot sequences via the BS analog beamforming matrix $\mathbf{F}_{\mathrm{RF}}^{T}$.
With the analog beamforming matrices, we have the equivalent channel between the BS and the users along the strongest AoA paths\footnote{The equivalent channel consists of the BS analog beamforming matrix, the mmWave channel, and the users analog beamforming matrix.}.

We denote the pilot sequences of user $k$ in the cell as $\mathbf{\Phi }_{k}=\left[ \vartheta _{k}\left( 1\right)
,\vartheta _{k}\left( 2\right) ,....,\vartheta _{k}\left( N\right) \right]^{T}$, $\mathbf{\Phi }_{k}\in\mathbb{C}^{N\times 1}$, stands for $N$ symbols transmitted across time.
The pilot symbols used for the equivalent channel estimation are transmitted in sequence from symbol $\vartheta _{k}\left( 1\right)$ to symbol $\vartheta _{k}\left(N\right)$.
The pilot symbols for all the $N$ users form a matrix, $\mathbf{\Psi \in\mathbb{C}}^{N\times N}\mathbf{,}$ where $\mathbf{\Phi }_{k}$ is a column vector of
matrix $\mathbf{\Psi }$ given by
$\mathbf{\Psi }=\ \sqrt{E_{\mathrm{P}}}\left[
\begin{array}{ccc}
\mathbf{\Phi }_{1}, & \ldots, & \mathbf{\Phi }_{N}%
\end{array}\right]$, $\mathbf{\Phi }_{i}^{H}\mathbf{\Phi }_{j}=0$, $\forall i\neq j$, $i,\text{ }j\in \left\{ 1,\ldots, N\right\}$,
where $E_{\mathrm{P}}$ represents the transmitted pilot symbol energy.
Note that $\mathbf{\Psi }^{H}\mathbf{\Psi }=E_{\mathrm{P}}\mathbf{I}_{N}$.
Meanwhile, the BS analog beamforming matrix $\mathbf{F}_{\mathrm{RF}}$ is utilized to receive pilot sequences at all the RF chains.
As the length of the pilot sequences is equal to the number of users, we obtain an $N \times N$ observation matrix from all the RF chains at the BS.
In particular, the received signal at the $k$-th RF chain at the BS is $\mathbf{s}_{k}^{T}\in\mathbb{C}^{1\times N}$, which is given by\vspace*{-1.0mm}%
\begin{equation}
\vspace*{-1mm}
\mathbf{s}_{k}^{T}  =\widetilde{\mathbf{\gamma}}_{k}^{T}\overset{N}{\underset{i=1}%
{\sum }}\mathbf{H}_{i}\widetilde{\mathbf{\omega }}_{i}^{\ast }\sqrt{E_{%
\mathrm{P}}}\mathbf{\Phi }_{i}^{T}+\widetilde{\mathbf{\gamma}}_{k}^{T}\mathbf{Z},\label{RSIRFC}\vspace*{-1mm}
\end{equation}\vspace*{-0mm}%
where $\mathbf{Z}\in\mathbb{C}^{M\times N}$ denotes the additive white Gaussian noise matrix at the BS and the entries of $\mathbf{Z}$ are modeled by i.i.d. random variable with distribution $\mathcal{CN}\left( 0,\sigma _{\mathrm{BS}}^{2}\right)$.

After $\left[\begin{array}{ccc}\mathbf{s}_{1}, \ldots, \mathbf{s}_{N}\end{array}\right]$ is obtained, we then adopt the least square (LS) method for our equivalent channel estimation.
We note here, the LS method is widely used in practice since it does not require any prior channel information.
Subsequently, with the help of orthogonal pilot sequences, we can construct an equivalent uplink channel matrix $\widehat{\mathbf{H}}_{\mathrm{eq}}\in\mathbb{C}^{N\times N}$ formed by the proposed scheme via the LS estimation method.
Then, by exploiting the channel reciprocity, the equivalent downlink channel of the hybrid system $\widehat{\mathbf{H}}_{\mathrm{eq}}^{T}$ can be expressed as:\vspace*{-2mm}
\begin{align}
\widehat{\mathbf{H}}_{\mathrm{eq}}^{T}&=\mathbf{\Psi }^{H}\left[\begin{array}{ccc}\mathbf{s}_{1} & \ldots & \mathbf{s}_{N}\end{array}\right]
=\left[\begin{array}{c}
\widehat{\mathbf{h}}_{\mathrm{eq,}1}^{T} \\
\vdots \\
\widehat{\mathbf{h}}_{\mathrm{eq,}N}^{T}
\end{array}\right] \notag \\
&=\underset{\mathbf{H}_{\mathrm{eq}}^{T}}{\underbrace{\left[
\begin{array}{c}
\widetilde{\mathbf{\omega }}_{1}^{H}\mathbf{H}_{1}^{T}\mathbf{F}_{\mathrm{RF}%
} \\
\vdots \\
\widetilde{\mathbf{\omega}}_{N}^{H}\mathbf{H}_{N}^{T}\mathbf{F}_{\mathrm{RF}}\end{array}\right]}}+\underset{\mathrm{effictive}\text{\ }\mathrm{noise}}{\underbrace{\frac{1}{\sqrt{E_{\mathrm{P}}}}\left[
\begin{array}{c}
\mathbf{\Phi}_{1}^{H}\mathbf{Z}^{T}\mathbf{F}_{\mathrm{RF}} \\
\vdots \\
\mathbf{\Phi} _{N}^{H}\mathbf{Z}^{T}\mathbf{F}_{\mathrm{RF}}
\end{array}\right] }},  \label{EHC_1}\vspace*{-2mm}
\end{align}\vspace*{-0mm}%
where $\widehat{\mathbf{h}}_{\mathrm{eq,}k}$ is the $k$-th column vector of matrix $\widehat{\mathbf{H}}_{\mathrm{eq}}$.
From Equation (\ref{EHC_1}), we observe that the proposed channel estimation algorithm can obtain all users' equivalent CSI simultaneously.

\vspace*{-2mm}
\subsection{Performance Analysis of Proposed Channel Estimation}
In the high SNR regime, the effective noise component is negligible and Equation (\ref{EHC_1}) can be simplified\footnote{The performance degradation due to the high SNR assumption will be verified by analysis and simulation in the following sections.} as\vspace*{-1mm}
\begin{align}
\vspace*{-2mm}
&\mathbf{H}_{\mathrm{eq}}^{T}=\left[
\begin{array}{c}
\widetilde{\mathbf{\omega }}_{1}^{H}\mathbf{H}_{1}^{T}\mathbf{F}_{\mathrm{RF}%
} \\
\vdots \\
\widetilde{\mathbf{\omega }}_{N}^{H}\mathbf{H}_{N}^{T}\mathbf{F}_{\mathrm{RF}%
}%
\end{array}%
\right] \notag \\
&=\left[
\begin{array}{ccc}
\widetilde{\mathbf{\omega }}_{1}^{H} & \cdots & \mathbf{0} \\
\vdots & \ddots  & \vdots \\
\mathbf{0} & \cdots & \widetilde{\mathbf{\omega }}_{N}^{H}%
\end{array}\right]
\left[
\begin{array}{c}
\mathbf{H}_{1}^{T} \\
\vdots \\
\mathbf{H}_{N}^{T}
\end{array}
\right]
\underset{\mathbf{F}_{\mathrm{RF}}}{\underbrace{\left[
\begin{array}{ccc}
\widetilde{\mathbf{\gamma}}_{1} & \ldots & \widetilde{\mathbf{\gamma}}_{N}\end{array}\right] }}.  \label{EHC_2}\vspace*{-2mm}
\end{align}\vspace*{-0mm}%
From Equation (\ref{EHC_2}), we can see that without the impact of noise, we can perfectly estimate the equivalent channels, which consists of the BS analog beamforming matrix $\mathbf{F}_{\mathrm{RF}}$, the users analog beamforming matrix $\mathbf{Q}_{\mathrm{RF}}^{T}$, and mmWave channels between the BS and all the users.

Now, we analyze the performance of the proposed channel estimation.
With the normalization factor $\frac{1}{\sqrt{MP}}$, the normalized mean square error (MSE) of equivalent channel estimation $\widehat{\mathbf{h}}_{\mathrm{eq,}k}^{T}$ is given by\vspace*{-0mm}
\begin{align}
\vspace*{-0mm}
\mathrm{MSE}_{\mathrm{eq,}k}& =\frac{1}{N}\mathrm{tr}\left\{
\mathrm{E}_{\mathbf{h}_{\mathrm{S},k}}\left[ \left( \dfrac{1}{\sqrt{MP}}\widehat{\mathbf{h}}_{\mathrm{eq,%
}k}^{T}-\dfrac{1}{\sqrt{MP}}\mathbf{h}_{\mathrm{eq},k}^{T}\right) ^{H} \right. \right. \notag \\
&\left. \left. \left(\dfrac{1}{\sqrt{MP}}\widehat{\mathbf{h}}_{\mathrm{eq,}k}^{T}-\dfrac{1}{\sqrt{MP}}\mathbf{h}_{\mathrm{eq},k}^{T}\right) \right] \right\}  \notag \\
& =\frac{\sigma _{\mathrm{BS}}^{2}\mathrm{tr}\left[ \mathbf{F}_{\mathrm{RF}%
}^{H}\mathbf{F}_{\mathrm{RF}}\right] }{E_{\mathrm{P}}NMP}=%
\frac{\sigma _{\mathrm{BS}}^{2}}{E_{\mathrm{P}}MP}.  \label{MSE_NN}\vspace*{-0mm}
\end{align}\vspace*{-0mm}%
From (\ref{MSE_NN}), we observe that the normalized MSE of the equivalent channel of user $k$ decreases with an increasing transmitted pilot symbol power $E_{\mathrm{P}}$ as well as the numbers of antennas equipped at the BS and at each user.
As the numbers of antennas $M\ $and $P$ approach infinity, the impact of noise will vanish asymptotically.
In contrast, the channel estimation errors caused by noise in conventional fully digital massive MIMO systems cannot be mitigated by increasing the number of antennas equipped at the BS and the users \cite{Marzetta2010,Jose2011}.
Therefore, the proposed channel estimation for the hybrid system outperforms the conventional pilot-aided channel estimation for the fully digital system in terms of noise mitigation.
The MSE analysis result will be verified via simulation.
Here, we would point out that the proposed channel estimation scheme can exploit large array gains offered by the antenna arrays via using the BS analog beamforming matrix $\mathbf{F}_{\mathrm{RF}}$ and the users analog beamforming matrix $\mathbf{Q}_{\mathrm{RF}}$ to enhance the receive SNR of pilot symbols \cite{book:wireless_comm}.
Therefore, it is expected that the performance of the proposed channel estimation scheme improves with increasing the numbers of antennas equipped at the BS and each user.

For a multi-cell scenario, the received pilot sequences at the RF chains of the desired BS will be affected by the reused pilot sequences from neighboring cells \cite{Marzetta2010,Jose2011,WU2016}. The threat of pilot contamination attack will be detailed and discussed in our future work.
\begin{figure}[t]
\centering\vspace{-1mm}
\includegraphics[width = 3.5 in]{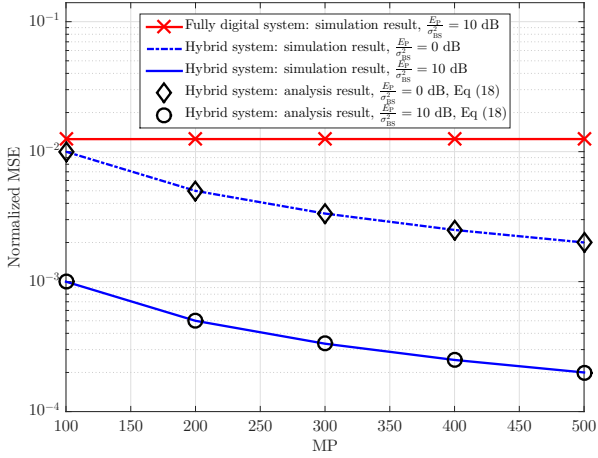}\vspace{-0mm}
\caption{The normalized MSE performance comparison between the proposed
pilot-aided channel estimation algorithm for the hybrid system and the conventional
pilot-aided LS for the fully digital system.}
\label{fig:antiPC}\vspace{-1mm}
\end{figure}

Figure \ref{fig:antiPC} shows the normalized MSE of the equivalent channel estimation of user $k$ versus the total numbers of antennas equipped at the BS and the user.
In Figure \ref{fig:antiPC}, we see that the simulation results match with analytical results derived in Equation (\ref{MSE_NN}).
For the SNR of the transmitted pilot at $\frac{E_{\mathrm{P}}}{\sigma _{\mathrm{BS}}^{2}}=10$ dB, it can be seen that the normalized MSE of the proposed algorithm decreases with an increasing $MP$.
On the contrary, we have the observation that the numbers of antennas equipped at the BS and users have no significant impact on the conventional pilot-aided LS channel estimation for the fully digital system.
Furthermore, Figure \ref{fig:antiPC} also shows that a high transmitted pilot energy is helpful to lower the normalized MSE of the proposed algorithm, since the impact of noise is reduced proportional to $\frac{1}{E_{\mathrm{P}}MP}$ as shown in (\ref{MSE_NN}).
It is interesting to note that, to meet a certain required MSE of channel estimation, we can either increase the number of antennas equipped at the BS or the number of antennas at each user. This indicates that increasing the number of antennas equipped at the BS can always improve the system performance, despite the possibly limited numbers of antennas equipped at the users.

\vspace*{-2mm}
\section{ZF Precoding and Performance Analysis}
In this section, we illustrate and analyze the achievable rate performance per user of the considered hybrid mmWave system under digital ZF downlink transmission.
The digital ZF downlink precoding is based on the estimated equivalent channel $\widehat{\mathbf{H}}_{\mathrm{eq}}$, which subsumes the BS analog beamforming matrix $\mathbf{F}_{\mathrm{RF}}$ and the users analog beamforming matrix $\mathbf{Q}_{\mathrm{RF}}$.
We derive a closed-form upper bound of achievable rate per user of the ZF precoding in the considered hybrid system.
Also, we compare the system achievable rate upper bound obtained by the fully digital system exploiting the ZF precoding for a large number of antennas.
The achievable rate performance gap between the considered hybrid mmWave system and the fully digital system is characterized, which is verified via analysis and simulation results.
\vspace*{-5mm}
\subsection{ZF Precoding}

Now, we utilize the estimated equivalent channel for downlink digital ZF precoding.
To study the best achievable rate performance, we first assume that the equivalent channel is estimated in the high SNR regime.
In this case, the equivalent channel is considered as perfectly estimated as the proposed channel estimation is only affected by noise as shown in Equation (\ref{MSE_NN}).
Therefore, the baseband digital ZF precoder $\overline{\mathbf{W}}_{\mathrm{eq}}\in\mathbb{C}^{N\times N}$ based on $\mathbf{H}_{\mathrm{eq}}$ is given by
\begin{equation}
\vspace*{-0mm}
\overline{\mathbf{W}}_{\mathrm{eq}}=\mathbf{H}_{\mathrm{eq}}^{\ast }(\mathbf{H}_{\mathrm{eq}}^{T}\mathbf{H}_{\mathrm{eq}}^{\ast})^{-1}=\left[\begin{array}{ccc}\overline{\mathbf{w}}_{\mathrm{eq,}1},\ldots ,\overline{\mathbf{w}}_{\mathrm{eq,}N}
\end{array}%
\right] ,  \label{P1}\vspace*{-0.0mm}
\end{equation}%
where $\overline{\mathbf{w}}_{\mathrm{eq,}k}\in\mathbb{C}^{N \times 1}$ is the $k$-th column of ZF precoder for user $k$.
As each user is equipped with only one RF chain, one superimposed signal is received at each user at each time instant with hybrid transceivers.
The received signal at user $k$ after beamforming can be expressed as:
\vspace*{-0mm}
\begin{align}
\vspace*{-1mm}
y_{\mathrm{ZF}}^{k}=&\underset{\mathrm{desired}\text{ }\mathrm{signal}}{%
\underbrace{\widetilde{\mathbf{\omega }}_{k}^{H}\mathbf{H}_{k}^{T}\mathbf{F}%
_{\mathrm{RF}}\overline{\beta }\overline{\mathbf{w}}_{\mathrm{eq,}k}x_{k}}} \notag \\
&+\underset{\mathrm{interference}}{\underbrace{\widetilde{\mathbf{\omega }}_{k}^{H}\mathbf{H}_{k}^{T}\overset{N}{\underset{j=1,j\neq k}{\sum }}\mathbf{F}%
_{\mathrm{RF}}\overline{\beta }\overline{\mathbf{w}}_{\mathrm{eq,}j}x_{j}}}+%
\underset{\mathrm{noise}}{\underbrace{\widetilde{\mathbf{\omega }}_{k}^{H}%
\mathbf{z}_{\mathrm{MS},k}}},  \label{P2}\vspace*{-1mm}
\end{align}\vspace*{-0mm}%
where $x_{k}\in\mathbb{C}^{1\times 1}$ is the transmitted symbol energy from the BS to user $k$, $\mathrm{E}\left[ \left\vert x_{k}^{2}\right\vert \right]=E_{s}$, $E_{s}$ is the average transmitted power for each user, $\overline{\beta }=\sqrt{\tfrac{1}{\mathrm{tr}(\overline{\mathbf{W}}%
_{\mathrm{eq}}\overline{\mathbf{W}}_{\mathrm{eq}}^{H})}}$ is the
transmission power normalization factor, and the effective
noise part $\mathbf{z}_{\mathrm{MS,}%
k}\sim \mathcal{CN}\left( \mathbf{0},{\sigma_{\mathrm{MS}}^{2}}\mathbf{I}\right) $.
Due to the fact that the MU interference within the AoA directions can be suppressed by the digital ZF precoder, thus\vspace*{-0mm}%
\begin{align}
\vspace*{-2mm}
&\mathbf{h}_{\mathrm{eq},i}^{T}\overline{\mathbf{w}}_{\mathrm{eq,}j}=0,\text{
}\forall i\neq j, \text{ } \mbox{and}\notag \\
&\widetilde{\mathbf{\omega }}_{k}^{H}\mathbf{H}_{k}^{T}\overset{N}{\underset{j=1,j\neq k}{\sum }}\mathbf{F}%
_{\mathrm{RF}}\left( \overline{\mathbf{w}}_{\mathrm{eq,}j}\right) x_{j}=0.
\label{P3}\vspace*{-2mm}
\end{align}\vspace*{-0mm}%
Then we express the signal-to-interference-plus-noise ratio (SINR) of user $k$ as \vspace{-1mm}
\begin{equation}
\mathrm{SINR}_{\mathrm{ZF}}^{k}=\frac{\overline{\beta }^{2}E_{s}}{\sigma _{%
\mathrm{MS}}^{2}}.  \label{Eq_1520}\vspace*{-1mm}
\end{equation}\vspace{-0mm}%
In the sequal, we study the performance of the considered hybrid mmWave systems.
For simplicity, we assume that channels of all the users have the same Rician K-factor, i.e., $\upsilon _{k}=\upsilon,\forall k$.

\vspace*{-2mm}
\subsection{Performance Upper Bound of ZF Precoding}
Now, exploiting the SINR expression in (\ref{Eq_1520}), we summarize the upper bound of achievable rate per user of the digital ZF precoding with the proposed channel estimation algorithm in the following theorem.\vspace*{-0.0mm}
\begin{theo}\label{thm:Theo_1}
The achievable rate per user of the proposed ZF precoding is bounded by \vspace*{-0mm}%
\begin{align}
R_{\mathrm{HB}} \leqslant R_{\mathrm{HB}}^{\mathrm{upper}}= & \log _{2}\left\{ 1+\left[ \left( \dfrac{\upsilon}{\upsilon +1}\right)MP  \| \mathbf{F}_{\mathrm{RF}}^{H}\mathbf{F}_{\mathrm{RF}} \|_{\mathrm{F}}^{2} \right. \right. \notag \\
&\left. \left. +\left( \dfrac{1}{\upsilon+1}\right) N^{2}\right] \frac{1}{N^{2}} \dfrac{E_{s}}{\sigma _{\mathrm{MS}}^{2}}\right\}.  \label{Theo_1}
\end{align}
\end{theo}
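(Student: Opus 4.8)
The plan is to turn the per-user SINR in (\ref{Eq_1520}) into an ergodic achievable-rate upper bound by two elementary convexity inequalities, and then to evaluate the single remaining expectation in closed form using the Rician decomposition (\ref{eqn:LOS_channel}) and the structure of the analog beamformers. Throughout I work in the high-SNR regime in which the effective noise in (\ref{EHC_1}) is dropped, so that $\widehat{\mathbf{H}}_{\mathrm{eq}}$ coincides with the noiseless equivalent channel of (\ref{EHC_2}).

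First I would simplify the power-normalization constant. Setting $\mathbf{G}\triangleq\mathbf{H}_{\mathrm{eq}}^{T}\mathbf{H}_{\mathrm{eq}}^{\ast}$, which is Hermitian positive definite, the ZF precoder reads $\overline{\mathbf{W}}_{\mathrm{eq}}=\mathbf{H}_{\mathrm{eq}}^{\ast}\mathbf{G}^{-1}$, and the cyclic property of the trace gives $\mathrm{tr}(\overline{\mathbf{W}}_{\mathrm{eq}}\overline{\mathbf{W}}_{\mathrm{eq}}^{H})=\mathrm{tr}(\mathbf{G}^{-1})$, hence $\mathrm{SINR}_{\mathrm{ZF}}^{k}=E_{s}/(\sigma_{\mathrm{MS}}^{2}\,\mathrm{tr}(\mathbf{G}^{-1}))$. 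Since $\log_{2}(1+x)$ is concave, Jensen's inequality gives $R_{\mathrm{HB}}=\mathrm{E}[\log_{2}(1+\mathrm{SINR}_{\mathrm{ZF}}^{k})]\leq\log_{2}\!\big(1+\tfrac{E_{s}}{\sigma_{\mathrm{MS}}^{2}}\,\mathrm{E}[1/\mathrm{tr}(\mathbf{G}^{-1})]\big)$. Applying the Cauchy--Schwarz inequality to the eigenvalues of $\mathbf{G}$ gives $\mathrm{tr}(\mathbf{G}^{-1})\,\mathrm{tr}(\mathbf{G})\geq N^{2}$, so that $1/\mathrm{tr}(\mathbf{G}^{-1})\leq\mathrm{tr}(\mathbf{G})/N^{2}=\|\mathbf{H}_{\mathrm{eq}}\|_{\mathrm{F}}^{2}/N^{2}$; it then remains only to evaluate $\mathrm{E}[\|\mathbf{H}_{\mathrm{eq}}\|_{\mathrm{F}}^{2}]$.

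For that last piece I would use the row structure in (\ref{EHC_2}), $\|\mathbf{H}_{\mathrm{eq}}\|_{\mathrm{F}}^{2}=\sum_{k=1}^{N}\|\widetilde{\mathbf{\omega}}_{k}^{H}\mathbf{H}_{k}^{T}\mathbf{F}_{\mathrm{RF}}\|^{2}$, and substitute $\mathbf{H}_{k}^{T}=\sqrt{\tfrac{\upsilon}{\upsilon+1}}\,\mathbf{h}_{\mathrm{L},k}^{\ast}(\mathbf{h}_{\mathrm{L},k}^{\mathrm{BS}})^{T}+\sqrt{\tfrac{1}{\upsilon+1}}\,\mathbf{H}_{\mathrm{S},k}^{T}$. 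Because $\mathbf{H}_{\mathrm{S},k}$ is zero-mean, the LOS/scattering cross term vanishes under $\mathrm{E}[\cdot]$ and the two contributions add. For the LOS part, the analog beamformers being matched to the strongest AoAs gives $\widetilde{\mathbf{\omega}}_{k}^{H}\mathbf{h}_{\mathrm{L},k}^{\ast}=\sqrt{P}$ and $\mathbf{h}_{\mathrm{L},k}^{\mathrm{BS}}=\sqrt{M}\,\widetilde{\mathbf{\gamma}}_{k}^{\ast}$, so $(\mathbf{h}_{\mathrm{L},k}^{\mathrm{BS}})^{T}\mathbf{F}_{\mathrm{RF}}=\sqrt{M}\,\widetilde{\mathbf{\gamma}}_{k}^{H}\mathbf{F}_{\mathrm{RF}}$ is $\sqrt{M}$ times the $k$-th row of $\mathbf{F}_{\mathrm{RF}}^{H}\mathbf{F}_{\mathrm{RF}}$, and summing the squared norms over $k$ produces $\frac{\upsilon}{\upsilon+1}MP\,\|\mathbf{F}_{\mathrm{RF}}^{H}\mathbf{F}_{\mathrm{RF}}\|_{\mathrm{F}}^{2}$. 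For the scattering part, since the entries of $\mathbf{H}_{\mathrm{S},k}$ are i.i.d.\ $\mathcal{CN}(0,1)$ and $\widetilde{\mathbf{\omega}}_{k},\widetilde{\mathbf{\gamma}}_{j}$ are unit-norm, $\mathrm{E}[\,|\widetilde{\mathbf{\omega}}_{k}^{H}\mathbf{H}_{\mathrm{S},k}^{T}\widetilde{\mathbf{\gamma}}_{j}|^{2}]=1$, so each user contributes $N$ and the total is $\frac{1}{\upsilon+1}N^{2}$. This yields $\mathrm{E}[\|\mathbf{H}_{\mathrm{eq}}\|_{\mathrm{F}}^{2}]=\frac{\upsilon}{\upsilon+1}MP\,\|\mathbf{F}_{\mathrm{RF}}^{H}\mathbf{F}_{\mathrm{RF}}\|_{\mathrm{F}}^{2}+\frac{1}{\upsilon+1}N^{2}$, and plugging it into the two bounds above gives (\ref{Theo_1}).

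I expect the main obstacle to be the bookkeeping in this last step for the LOS term: one must argue that, in the high-SNR, grid-aligned regime underlying the bound, the estimated strongest-AoA beamformers act as deterministic unit vectors perfectly matched to $\mathbf{h}_{\mathrm{L},k}^{\mathrm{BS}}$ and $\mathbf{h}_{\mathrm{L},k}$, so the array-response inner products collapse to $\sqrt{M}$ and $\sqrt{P}$, and that the residual inter-user leakage is captured exactly by the off-diagonal entries of $\mathbf{F}_{\mathrm{RF}}^{H}\mathbf{F}_{\mathrm{RF}}$ --- which is precisely why the bound carries the factor $\|\mathbf{F}_{\mathrm{RF}}^{H}\mathbf{F}_{\mathrm{RF}}\|_{\mathrm{F}}^{2}$ rather than simply $N$. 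The two convexity steps are routine; it is worth keeping in mind only that they are the places where the inequality in the statement is incurred, with equality approached in the large-antenna regime where $\mathbf{G}$ concentrates around a scaled identity.
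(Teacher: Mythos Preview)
Your proposal is correct and follows essentially the same route as the paper's proof: both reduce $\overline{\beta}^{2}$ to $1/\mathrm{tr}(\mathbf{G}^{-1})$, bound $1/\mathrm{tr}(\mathbf{G}^{-1})\le \mathrm{tr}(\mathbf{G})/N^{2}$ via the harmonic--arithmetic mean (equivalently, your Cauchy--Schwarz on the eigenvalues), apply Jensen's inequality to pull the expectation through $\log_{2}(1+\cdot)$, and then evaluate $\mathrm{E}[\mathrm{tr}(\mathbf{H}_{\mathrm{eq}}^{T}\mathbf{H}_{\mathrm{eq}}^{\ast})]$ using the Rician split. The only cosmetic difference is the order in which the two inequalities are applied (the paper does the trace bound first, then Jensen; you do the reverse), and you spell out the LOS/scattering bookkeeping that the paper leaves implicit.
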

\begin{proof}
Please refer to Appendix A.
\end{proof}
\vspace*{+1mm}
From Equation (\ref{Theo_1}), we see that the upper bound of achievable rate per user of the proposed ZF precoding depends on the Rician K-factor, $ \upsilon$.
Also, we can further observe that the upper bound of the achievable rate per user also depends on the BS analog beamforming matrix $\mathbf{F}_{\mathrm{RF}}$ designed in Step $2$ of the proposed channel estimation algorithm.
We note that since the analog beamforming only allows the BS to transmit each user's signal via its strongest AoA direction, the proposed scheme can utilize the transmission power more effectively.
In addition, the interference outside the strongest AoA directions is reduced as less transmission power will leak to undesired users.
On other hand, with an increasing number of antennas at the BS, the communication channels are more likely to be orthogonal to each other.
Therefore, it is interesting to evaluate the asymptotic upper bound $R_{\mathrm{HB}}^{\mathrm{upper}}$ for the case of a large number of antennas. We note that, even if the number of antennas equipped at the BS is sufficiently large, the required number of RF chains is still only required to equal to the number of users in the hybrid mmWave system structures. 
\vspace*{-0mm}
\begin{coro}
In the large numbers of antennas regime, i.e., $M\rightarrow\infty $, such that $\mathbf{F}_{\mathrm{RF}}^{H}\mathbf{F}_{\mathrm{RF}}\overset{a.s.}{\rightarrow }$ $\mathbf{I}_{N},$ the asymptotic achievable rate per user of the hybrid system is bounded above by
\begin{equation}
\hspace*{-2mm}
R_{\mathrm{HB}}^{\mathrm{upper}}\underset{M\rightarrow \infty }{\overset{a.s.}{\rightarrow }}\log _{2}\left\{ 1+\left[
\frac{MP}{N}\frac{\upsilon }{\upsilon +1} +\dfrac{1}{\upsilon +1}\right] \dfrac{E_{s}}{\sigma _{\mathrm{MS}}^{2}}\right\}.
\label{HSUB_LA}\vspace*{-0mm}
\end{equation}\label{Coro_1}
\end{coro}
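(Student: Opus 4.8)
\textbf{Proof proposal for Corollary \ref{Coro_1}.}
The plan is to start directly from the finite-$M$ upper bound \eqref{Theo_1} established in Theorem \ref{thm:Theo_1} and pass to the limit $M\to\infty$ term by term inside the $\log_2\{\cdot\}$, which is legitimate since $\log_2(1+x)$ is continuous. The only quantity in \eqref{Theo_1} that depends on $M$ in a nontrivial way (beyond the explicit prefactor $MP$) is the Frobenius norm $\|\mathbf{F}_{\mathrm{RF}}^{H}\mathbf{F}_{\mathrm{RF}}\|_{\mathrm{F}}^{2}$. Once I show $\mathbf{F}_{\mathrm{RF}}^{H}\mathbf{F}_{\mathrm{RF}}\overset{a.s.}{\rightarrow}\mathbf{I}_{N}$, I get $\|\mathbf{F}_{\mathrm{RF}}^{H}\mathbf{F}_{\mathrm{RF}}\|_{\mathrm{F}}^{2}\to\|\mathbf{I}_{N}\|_{\mathrm{F}}^{2}=N$, and substituting into \eqref{Theo_1} the first bracketed term becomes $\left(\tfrac{\upsilon}{\upsilon+1}\right)MP\cdot N\cdot\tfrac{1}{N^{2}}=\tfrac{MP}{N}\tfrac{\upsilon}{\upsilon+1}$ while the second becomes $\left(\tfrac{1}{\upsilon+1}\right)N^{2}\cdot\tfrac{1}{N^{2}}=\tfrac{1}{\upsilon+1}$, which is exactly \eqref{HSUB_LA}.

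The substantive step is therefore the asymptotic orthonormality of the analog beamforming matrix. Recall $\mathbf{F}_{\mathrm{RF}}=[\widetilde{\bm{\gamma}}_{1},\ldots,\widetilde{\bm{\gamma}}_{N}]$, where each $\widetilde{\bm{\gamma}}_{k}$ is a unit-norm ULA steering vector of the form $\tfrac{1}{\sqrt{M}}[1,\ldots,e^{j2\pi(M-1)\frac{d}{\lambda}\cos\widehat\theta_{k}}]^{T}$ associated with the estimated strongest AoA $\widehat\theta_{k}$ of user $k$. The diagonal entries are identically $\widetilde{\bm{\gamma}}_{k}^{H}\widetilde{\bm{\gamma}}_{k}=1$ by construction. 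For an off-diagonal entry with $k\neq\ell$, $\widetilde{\bm{\gamma}}_{k}^{H}\widetilde{\bm{\gamma}}_{\ell}=\tfrac{1}{M}\sum_{m=0}^{M-1}e^{j2\pi m\frac{d}{\lambda}(\cos\widehat\theta_{\ell}-\cos\widehat\theta_{k})}$ is a normalized geometric sum (a Dirichlet-type kernel) whose magnitude is bounded by $\tfrac{1}{M}\big|\sin(M\pi\frac{d}{\lambda}\Delta)/\sin(\pi\frac{d}{\lambda}\Delta)\big|$ with $\Delta=\cos\widehat\theta_{\ell}-\cos\widehat\theta_{k}$; this tends to $0$ as $M\to\infty$ whenever $\Delta\neq0$. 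I would then invoke the modelling assumption (stated in Section II) that the users are separated by hundreds of wavelengths, so their true LOS angles, and hence the estimated strongest AoAs $\widehat\theta_{k}\neq\widehat\theta_{\ell}$, are distinct almost surely; combined with the entrywise limit this gives $\mathbf{F}_{\mathrm{RF}}^{H}\mathbf{F}_{\mathrm{RF}}\overset{a.s.}{\rightarrow}\mathbf{I}_{N}$ (the $N\times N$ matrix has a fixed finite number of entries, so entrywise a.s. convergence upgrades to a.s. convergence of the matrix).

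The main obstacle I anticipate is making the $\Delta\neq0$ step fully rigorous in the presence of AoA estimation on the discrete grid $\widehat\theta_{i}=(i-1)\tfrac{180}{J}$ with $J\approx\tfrac{2M}{1.782}$: one must argue that for large $M$ distinct users are quantized to distinct grid points with probability one (their angular separation eventually exceeds the grid spacing), and also control the near-resonance case where $\Delta$ is small but nonzero so that the denominator $\sin(\pi\frac{d}{\lambda}\Delta)$ does not cancel the $\tfrac{1}{M}$ decay — here the fixed minimum angular separation of well-separated users again saves the argument, keeping $|\Delta|$ bounded away from $0$ uniformly in $M$. Everything else is a direct substitution and continuity argument, so I expect the write-up to be short once this convergence is in hand.
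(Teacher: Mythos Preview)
Your proposal is correct and follows essentially the same approach as the paper: the paper's entire proof is the one-line substitution of $\mathbf{F}_{\mathrm{RF}}^{H}\mathbf{F}_{\mathrm{RF}}\overset{a.s.}{\rightarrow}\mathbf{I}_{N}$ into \eqref{Theo_1}, which is exactly your core step of replacing $\|\mathbf{F}_{\mathrm{RF}}^{H}\mathbf{F}_{\mathrm{RF}}\|_{\mathrm{F}}^{2}$ by $N$. Your Dirichlet-kernel argument for the asymptotic orthonormality is additional work the paper does not attempt, since that convergence is stated as a hypothesis of the corollary rather than a conclusion to be proved.
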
\vspace{-2mm}
\begin{proof}
The result follows by substituting $\mathbf{F}_{\mathrm{RF}}^{H}\mathbf{F}_{\mathrm{RF}}\underset{M\rightarrow \infty }{\overset{a.s.}{\rightarrow }}\mathbf{I}_{N}$ into (\ref{Theo_1}).
\end{proof}
\vspace*{+1mm}
From Equation (\ref{HSUB_LA}), we have an intuitive observation that the asymptotic performance of the proposed precoding is mainly determined by the numbers of equipped antennas and RF chains.
\vspace*{-2mm}
\subsection{Comparison with Fully Digital Systems}

In this section, we derive the achievable rate performance of a fully digital mmWave system in the large numbers of antennas regime.
The obtained analytical results in this section will be used as a reference for comparing to the proposed hybrid system.
To this end, for the fully digital mmWave system, we assume that each user is equipped with one RF chain and $P$ antennas.
The $P$ antenna array equipped at each user can provide $10\log(P)$ dB array gain.
We note that, the number of antennas equipped at the BS is $M$ and the number of RF chains equipped at the BS is equal to the number of antennas.
The channel matrix from the BS to user $k$ is given by\vspace*{-0.0mm}
\begin{equation}
\vspace*{-0mm}
\mathbf{H}_{k}^{T}=\mathbf{h}_{k}^{\ast }\mathbf{h}_{\mathrm{BS,}k}^{T}.\vspace*{-0mm}
\end{equation}\vspace*{-0mm}%
We assume that the CSI is perfectly known to the users and the BS.
The BS with the fully digital system is adopted to illustrate the maximal performance gap in terms of achievable rate between the fully digital system and the considered hybrid system.
The CSI at the BS for the downlink information transmission is given by\vspace*{-0.0mm}
\begin{equation}
\vspace*{-0mm}
\mathbf{H}_{\mathrm{FD}}^{T}=\left[
\begin{array}{ccccc}
{\mathbf{h}}_{\mathrm{BS,}1}^{T} & ... & {\mathbf{h}}_{%
\mathrm{BS,}k}^{T} & ... & {\mathbf{h}}_{\mathrm{BS,}N}^{T}%
\end{array}%
\right] .\vspace*{-0mm}
\end{equation}\vspace*{-0.0mm}%
The ZF precoder for the equivalent channel $\mathbf{H}_{\mathrm{FD}}^{T}$ is
denoted as\vspace*{-1.5mm}
\begin{equation}
\mathbf{W}_{\mathrm{FD}}=\mathbf{H}_{\mathrm{FD}}^{\ast }\left( \mathbf{H}_{%
\mathrm{FD}}^{T}\mathbf{H}_{\mathrm{FD}}^{\ast }\right) ^{-1}.
\end{equation}\vspace*{-0mm}%
Therefore, the achievable rate per user of the fully digital system is bounded by
\begin{align}
R_{\mathrm{FD}}&=\log _{2}\left[ 1+\dfrac{P}{\mathrm{tr}\left[ \mathbf{W}_{\mathrm{FD}}\mathbf{W}_{\mathrm{FD}}^{H}\right] }\dfrac{E_{s}}{\sigma _{\mathrm{MS}}^{2}}\right] \notag \\
& \overset{(c)}{\leqslant }\log _{2}\left[ 1+\frac{P }{N^{2}}\mathrm{tr}\left[ \mathbf{H}_{\mathrm{FD}}^{H}\mathbf{H}_{\mathrm{FD}}\right] \dfrac{E_{s}}{\sigma_{\mathrm{MS}} ^{2}}\right] =R_{\mathrm{FD}}^{\mathrm{upper}},  \label{FDUB}
\end{align}
where $(c)$ follows (\ref{Proof_3}) in Appendix A.%
\vspace*{-0mm}
\begin{coro}
In the large numbers of antennas regime, the asymptotic achievable rate per user of the fully digital system is bounded by\vspace*{-1mm}
\begin{equation}
\vspace*{-1mm}
R_{\mathrm{FD}}\leqslant R_{\mathrm{FD}}^{\mathrm{upper}}\underset{%
M\rightarrow \infty }{\overset{a.s.}{\rightarrow }}\log _{2}\left[ 1+\frac{MP%
}{N}\frac{E_{s}}{\sigma _{\mathrm{MS}}^{2}}\right] .  \label{FDUB_LA}\vspace*{-0mm}
\end{equation}
\end{coro}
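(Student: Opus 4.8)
The plan is to reuse the upper bound $R_{\mathrm{FD}}^{\mathrm{upper}}$ already derived in~(\ref{FDUB}) and to evaluate the trace $\mathrm{tr}[\mathbf{H}_{\mathrm{FD}}^{H}\mathbf{H}_{\mathrm{FD}}]$ appearing there. Since the $k$-th column of $\mathbf{H}_{\mathrm{FD}}$ is the BS ULA steering vector $\mathbf{h}_{\mathrm{BS},k}$ (the same array response that enters $\mathbf{h}_{\mathrm{L},k}^{\mathrm{BS}}$, hence with unit-modulus entries), we have $\mathrm{tr}[\mathbf{H}_{\mathrm{FD}}^{H}\mathbf{H}_{\mathrm{FD}}]=\sum_{k=1}^{N}\|\mathbf{h}_{\mathrm{BS},k}\|^{2}=NM$. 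Substituting $NM$ into~(\ref{FDUB}) immediately gives $R_{\mathrm{FD}}^{\mathrm{upper}}=\log_{2}[1+\frac{MP}{N}\frac{E_{s}}{\sigma_{\mathrm{MS}}^{2}}]$, i.e. the claimed limit. This step is a one-line substitution, completely analogous to the one-line proof of the hybrid-system corollary.

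The ``$M\rightarrow\infty$'' and ``a.s.'' qualifiers are only needed to show that the inequality $(c)$ in~(\ref{FDUB}) becomes tight, so that $R_{\mathrm{FD}}$ itself --- not just its upper bound --- converges to the same value. Equality in $(c)$ holds iff the eigenvalues of $\mathbf{H}_{\mathrm{FD}}^{T}\mathbf{H}_{\mathrm{FD}}^{\ast}$ coincide. With $d=\lambda/2$, the $(i,j)$ entry of $\frac{1}{M}\mathbf{H}_{\mathrm{FD}}^{T}\mathbf{H}_{\mathrm{FD}}^{\ast}$ is $\frac{1}{M}\sum_{m=0}^{M-1}e^{j\pi m(\cos\theta_{j}-\cos\theta_{i})}$, which equals $1$ for $i=j$ and, by the geometric-sum bound $|\sum_{m=0}^{M-1}e^{j\pi m\psi}|\leq 1/|\sin(\pi\psi/2)|$, is $O(1/M)$ for $i\neq j$ as long as $\theta_{i}\neq\theta_{j}$. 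Because the users are separated by hundreds of wavelengths, they occupy distinct AoAs almost surely, so $\frac{1}{M}\mathbf{H}_{\mathrm{FD}}^{T}\mathbf{H}_{\mathrm{FD}}^{\ast}\underset{M\rightarrow\infty}{\overset{a.s.}{\rightarrow}}\mathbf{I}_{N}$. Consequently $\mathrm{tr}[\mathbf{W}_{\mathrm{FD}}\mathbf{W}_{\mathrm{FD}}^{H}]=\mathrm{tr}[(\mathbf{H}_{\mathrm{FD}}^{T}\mathbf{H}_{\mathrm{FD}}^{\ast})^{-1}]\overset{a.s.}{\rightarrow}N/M$, and substituting this into the exact expression $R_{\mathrm{FD}}=\log_{2}[1+\frac{P}{\mathrm{tr}[\mathbf{W}_{\mathrm{FD}}\mathbf{W}_{\mathrm{FD}}^{H}]}\frac{E_{s}}{\sigma_{\mathrm{MS}}^{2}}]$ shows $R_{\mathrm{FD}}\overset{a.s.}{\rightarrow}\log_{2}[1+\frac{MP}{N}\frac{E_{s}}{\sigma_{\mathrm{MS}}^{2}}]$ as well. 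This mirrors the way $\mathbf{F}_{\mathrm{RF}}^{H}\mathbf{F}_{\mathrm{RF}}\overset{a.s.}{\rightarrow}\mathbf{I}_{N}$ was invoked for the hybrid system.

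I expect the second step to be the only nontrivial point: converting the entrywise convergence of $\frac{1}{M}\mathbf{H}_{\mathrm{FD}}^{T}\mathbf{H}_{\mathrm{FD}}^{\ast}$ into convergence of $\mathrm{tr}[(\mathbf{H}_{\mathrm{FD}}^{T}\mathbf{H}_{\mathrm{FD}}^{\ast})^{-1}]$ requires that $\mathbf{H}_{\mathrm{FD}}^{T}\mathbf{H}_{\mathrm{FD}}^{\ast}$ be eventually well-conditioned (which follows from the Vandermonde structure of distinct-AoA steering vectors once $M\geq N$, but should be stated), and it requires fixing a probability model for the AoAs so that ``almost surely'' is well defined. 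If one only wants the statement exactly as written --- convergence of the \emph{upper bound} $R_{\mathrm{FD}}^{\mathrm{upper}}$ --- then the first paragraph alone is enough and the qualifiers are essentially cosmetic; the second paragraph is what makes $R_{\mathrm{FD}}^{\mathrm{upper}}$ a legitimate asymptotic performance benchmark for the comparison with the hybrid system.
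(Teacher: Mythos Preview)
Your proposal is correct and follows essentially the same route as the paper: the paper's one-line proof merely substitutes $\frac{1}{M}\mathbf{H}_{\mathrm{FD}}^{H}\mathbf{H}_{\mathrm{FD}}\underset{M\rightarrow\infty}{\overset{a.s.}{\rightarrow}}\mathbf{I}_{N}$ into~(\ref{FDUB}), which is precisely the asymptotic orthogonality of the BS-side channel vectors that you spell out and justify in your second paragraph. Your first-paragraph observation that $\mathrm{tr}[\mathbf{H}_{\mathrm{FD}}^{H}\mathbf{H}_{\mathrm{FD}}]=NM$ holds exactly (under the steering-vector interpretation of $\mathbf{h}_{\mathrm{BS},k}$) and your additional argument that the bound becomes tight for $R_{\mathrm{FD}}$ itself both go beyond what the paper actually supplies.
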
\vspace*{-0mm}
\begin{proof}
The result follows by substituting $\frac{1}{M}\mathbf{H}_{\mathrm{FD}}^{H}%
\mathbf{H}_{\mathrm{FD}}\underset{M\rightarrow \infty }{\overset{a.s.}{%
\rightarrow }}\mathbf{I}_{N}$ into (\ref{FDUB}).
\end{proof}
\vspace{+1mm}
\begin{figure}[t]
\centering\vspace*{-1mm}
\begin{subfigure}{.5\textwidth}
  \centering
  \includegraphics[width=3.5in,]{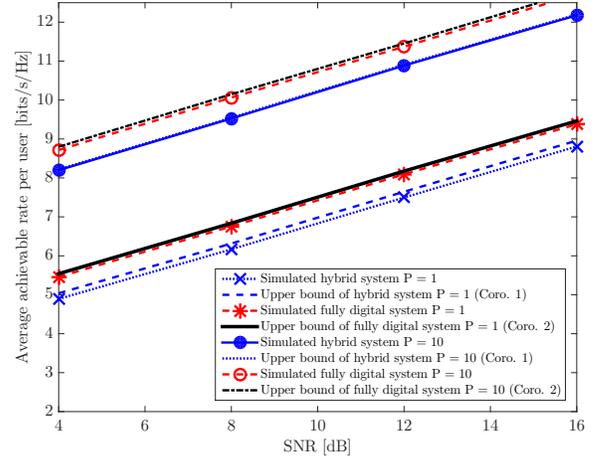}
  \caption{Average achievable rate versus SNR.}
  \label{fig:HBvsFD}
\end{subfigure}%
\vfill
\begin{subfigure}{.5\textwidth}
  \centering
  \includegraphics[width=3.5in,]{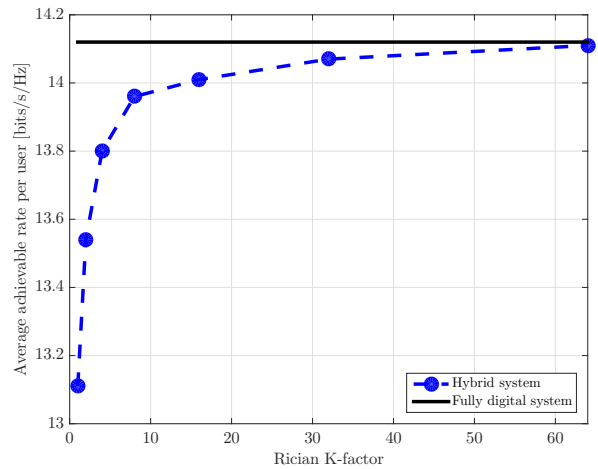}
  \caption{Average achievable rate versus Rician K-factor.}
  \label{fig:HBFDvsRK}
  \end{subfigure}
\vspace*{-0mm}
\caption{(a) Average achievable rate per user (bits/s/Hz) versus SNR for the hybrid and the fully digital systems with $N=10$ and $\upsilon =2$. (b) Average achievable
rate (bits/s/Hz) versus Rician K-factor for different systems with $\mathrm{SNR}=20$ dB, $P=16$, and $N=4$. We set the number of antennas $M=100$ for both (a) and (b).}
\label{fig:HBFD}\vspace{-1mm}
\end{figure}

Based on (\ref{HSUB_LA}) and (\ref{FDUB_LA}), we further quantify the achievable rate performance gap between the considered hybrid system and the fully digital system in the large numbers of antennas regime.\vspace{-0mm}%
\begin{coro}
In the large numbers of antennas regime, the gap between the achievable rate upper bounds for the hybrid system and the fully digital system can be expressed as\vspace{-1mm}:
\begin{equation}
\vspace*{-1mm}
\Delta_{\mathrm{GAP}}=R_{\mathrm{HB}}^{\mathrm{upper}}-R_{\mathrm{FD}}^{\mathrm{upper}}\underset{M\rightarrow \infty }{\overset{\mathrm{(S)}}{\approx }}\log _{2}\left\{ \frac{\upsilon}{\upsilon+1}\right\} \leqslant 0,\label{GAP_LA}\vspace*{-1mm}
\end{equation}\vspace{-0mm}%
where $\mathrm{(S)}$ {}stands for $\mathrm{SNR\rightarrow \infty .}$ \label{coro3}
\end{coro}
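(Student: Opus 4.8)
The plan is to work directly from the two asymptotic upper bounds already in hand, namely the hybrid bound $(\ref{HSUB_LA})$ and the fully digital bound $(\ref{FDUB_LA})$. First I would form the difference $\Delta_{\mathrm{GAP}} = R_{\mathrm{HB}}^{\mathrm{upper}} - R_{\mathrm{FD}}^{\mathrm{upper}}$ and use $\log_2 a - \log_2 b = \log_2(a/b)$ to collapse it into a single logarithm of the ratio of the two bracketed SINR-like quantities,
\begin{equation}
\Delta_{\mathrm{GAP}} \underset{M\rightarrow\infty}{\overset{a.s.}{\rightarrow}} \log_2\left\{ \frac{1 + \left[\frac{MP}{N}\frac{\upsilon}{\upsilon+1} + \frac{1}{\upsilon+1}\right]\frac{E_s}{\sigma_{\mathrm{MS}}^2}}{1 + \frac{MP}{N}\frac{E_s}{\sigma_{\mathrm{MS}}^2}} \right\}.
\end{equation}

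Next I would take the high-SNR limit $\frac{E_s}{\sigma_{\mathrm{MS}}^2} \rightarrow \infty$: the additive unity terms in both numerator and denominator are then dominated by the terms proportional to $\frac{E_s}{\sigma_{\mathrm{MS}}^2}$, so the argument of the logarithm tends to $\left(\frac{MP}{N}\frac{\upsilon}{\upsilon+1} + \frac{1}{\upsilon+1}\right)\big/\frac{MP}{N} = \frac{\upsilon}{\upsilon+1} + \frac{N}{MP(\upsilon+1)}$. This is exactly what the symbol $\mathrm{(S)}$ in $(\ref{GAP_LA})$ records — that an $O(\sigma_{\mathrm{MS}}^2/E_s)$ correction has been discarded. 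Since the whole statement already lives in the regime $M\rightarrow\infty$ with $N$ and $P$ fixed, the residual term $\frac{N}{MP(\upsilon+1)}$ vanishes and we are left with $\Delta_{\mathrm{GAP}} \approx \log_2\left\{\frac{\upsilon}{\upsilon+1}\right\}$. To finish, I would observe that for any Rician $K$-factor $\upsilon > 0$ one has $0 < \frac{\upsilon}{\upsilon+1} < 1$, whence $\log_2\left\{\frac{\upsilon}{\upsilon+1}\right\} < 0 \leqslant 0$; this also makes transparent the intended interpretation, that the loss relative to the fully digital system is governed solely by the ratio of LOS power to scattered power and is negligible when $\upsilon$ is large.

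There is no genuine obstacle here — the result is a short manipulation of the two preceding corollaries — but the one point I would take care to spell out is the \emph{order} of the two limits. The $a.s.$ convergence $\mathbf{F}_{\mathrm{RF}}^{H}\mathbf{F}_{\mathrm{RF}} \rightarrow \mathbf{I}_N$ underlying $(\ref{HSUB_LA})$ is an $M\rightarrow\infty$ statement, whereas dropping the unity terms is an $\mathrm{SNR}\rightarrow\infty$ statement; I would therefore state explicitly that $\mathrm{(S)}$ means the SNR limit is applied first, at fixed (large) $M$, so that the remaining $M\rightarrow\infty$ passage acts on an already-deterministic expression and no question of interchanging limits arises.
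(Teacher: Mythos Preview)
Your proposal is correct and matches the paper's own proof, which simply states that the result follows by substituting $(\ref{HSUB_LA})$ and $(\ref{FDUB_LA})$ into $(\ref{GAP_LA})$. Your write-up is in fact more careful than the paper's one-line justification, particularly in making explicit the role of the high-SNR limit $\mathrm{(S)}$ and the vanishing of the residual $\frac{N}{MP(\upsilon+1)}$ term as $M\rightarrow\infty$.
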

\vspace*{-0mm}
\begin{proof}
The result follows by substituting (\ref{HSUB_LA}) and (\ref{FDUB_LA}) into (\ref{GAP_LA}).
\end{proof}
\vspace*{+1mm}
In the large numbers of antennas regime, based on (\ref{HSUB_LA}) and (\ref{FDUB_LA}), it is interesting to observe that with an increasing Rician K-factor $\upsilon $, the performance upper bounds of the two considered systems will coincide.
Intuitively, as the Rician K-factor increases, the LOS component becomes the dominant element of the communication channel, as shown in Equation \eqref{eqn:LOS_channel}.
Therefore, the BS analog beamforming matrix based on the estimated strongest AoA will allocate a smaller portion of the transmitted signal energy to the scattering component.
At the same time, the interference caused by other users is suppressed by the digital baseband ZF precoding, $\overline{\mathbf{W}}_{\mathrm{eq}}$.

In Figure \ref{fig:HBvsFD}, we present a comparison between the achievable rate per user of the hybrid system and the fully digital system for $M=100,$ $N=10$, and a Rician K-factor of $\upsilon_{k} =2,\forall k$.
Firstly, our simulation results verify the tightness of derived upper bounds in (\ref{HSUB_LA}) and (\ref{FDUB_LA}).
It can be observed from Figure \ref{fig:HBvsFD} that even for a small value of Rician K-factor, e.g. $\upsilon =2$, our proposed channel estimation scheme with ZF precoding can achieve a considerable high sum rate performance due to its interference suppression capability.
In Figure \ref{fig:HBFDvsRK}, the achievable rate performance gap between the fully digital system and the hybrid system decreases with the increasing Rician K-factor, which is predicted by Equation (\ref{GAP_LA}).
In particular, with a sufficiently large Rician K-factor, the achievable rates of these two systems will coincide.

\vspace*{-0mm}
\section{Performance Analysis with Hardware Impairments}

In the last section, we study the proposed mmWave hybrid system in ideal hardware and ideal estimation conditions.
{In practice, hardware components may have various types of impairments that may degrade the achievable rate performance\footnote{{The authors of \cite{Bjornson2015bb} proved that for a fully digital massive MIMO system, the additive distortion caused by hardware impairments create finite ceilings on the channel
estimation accuracy and on the uplink/downlink capacity, which are irrespective of the SNR and the
number of base station antennas.
In addition, work \cite{Ying2015} concluded that the impact of phase error on hybrid beamforming is a further reduction on the number of effective antenna per user.
In fact, the achievable rate degradation caused by phase errors can be compensated by simply employing more transmit antennas.}}, e.g. phase errors in phase shifters induced by thermal noise, transceiver RF beamforming errors caused by AoA estimation errors, and channel estimation errors affected by independent additive distortion noises \cite{Bjornson2015bb,Zhang2016,Ying2015}.}
In this section, we analyze the rate performance degradation under hardware impairments.
\vspace*{-5mm}
\subsection{Transceiver Beamforming Errors and Random Phase Errors}

Here, we first discuss the scenario that the equivalent channels are estimated in the high pilot transmit power regime, i.e., $\frac{E_{\mathrm{P}}}{\sigma _{\mathrm{BS}}^{2}}\rightarrow \infty $, where errors caused by thermal noise are negligible.
As a result, we focus on the effect of the phase errors in phase shifters and transceiver analog beamforming errors on the achievable rate performance.

Firstly, we start from quantifying the impact of transceiver beamforming errors, which is caused by AoA estimation errors.
As the number of antennas equipped at the BS is sufficiently large, the beamwidth of the antenna array is narrow.
Therefore, even a small AoA estimation error may cause significant impacts on the system performance.
Here, the BS analog beamforming error matrix of user $k$, $\mathbf{\Delta  }_{k}\in\mathbb{C}^{M\times M}$, which is caused by the AoA estimation error, can be expressed as %
\begin{equation}
\vspace*{-1.5mm}
\mathbf{\Delta }_{k}=\frac{1}{\sqrt{M}}\left[
\begin{array}{ccc}
1 &  &  \\
& \ddots &  \\
&  & e^{j2\pi \left( M-1\right) \tfrac{d}{\lambda }\cos \left( \Delta \theta
_{\mathrm{BS,}k}\right) }%
\end{array}%
\right] ,
\end{equation}\vspace*{-0mm}%
where the AoA estimation error $\Delta \theta _{\mathrm{BS,}k}$ at the BS for user $k$ is modeled by i.i.d. random variable with the distribution $\mathcal{CN}\left( 0, \varrho _{\mathrm{BS}}^{2}\right)$.
In addition, the user beamforming error matrix $\mathbf{\Theta }_{k}\in\mathbb{C}^{P\times P}$, which is caused by AoA estimation error $\Delta \theta _{\mathrm{MS,}k}$, can also be similarly formulated.
Now, we assume that the hybrid system is only affected by AoA estimation errors.
Therefore, the received pilot symbols at the the BS during the third step of channel estimation is expressed as:\vspace*{-0.0mm}
\begin{equation}
\vspace*{-1mm}
\left(\widehat{\mathbf{s}}_{k}^{\mathrm{BE}}\right)^{T}=\widetilde{\mathbf{\gamma}}_{k}^{T}\mathbf{\Delta }%
_{k}\overset{N}{\underset{i=1}{\dsum }}\mathbf{H%
}_{i}\mathbf{\Theta }_{i}\widetilde{\mathbf{\omega }}_{%
i}^{\ast }\sqrt{E_{\mathrm{P}}}\mathbf{\Phi }_{i}^{T}+\underset{\mathrm{effective}\text{ }\mathrm{noise}}{\underbrace{\widehat{\mathbf{z}}_{\mathrm{eq}}}},\vspace*{-0mm}
\end{equation}\vspace*{-0mm}%
where the entries of effective noise part $\widehat{\mathbf{z}}_{\mathrm{eq}}=\mathbf{f}_{k}^{T}\mathbf{\Delta }_{k}\mathbf{\Xi }_{k}\mathbf{Z}$ can still be modeled by i.i.d. random variable with distribution $\mathcal{CN}\left( 0,\sigma _{\mathrm{BS}}^{2}\right)$.
As we assumed that $\frac{E_{\mathrm{P}}}{\sigma _{\mathrm{BS}}^{2}}\rightarrow \infty $, the estimated equivalent channel under transceiver analog beamforming errors, $\mathbf{H}_{\mathrm{eq}}^{\mathrm{BE}}$, is given by%
\begin{align}
\vspace*{-1mm}
&\left( \mathbf{H}_{\mathrm{eq}}^{\mathrm{BE}}\right) ^{T}= \\
&\left[
\begin{array}{ccc}
\widetilde{\mathbf{\omega }}_{1}^{H}\mathbf{\Theta }_{1}&\cdots&\mathbf{0%
} \\
\vdots & \ddots & \vdots \\
\mathbf{0}&\cdots&\widetilde{\mathbf{\omega }}_{N}^{H}\mathbf{\Theta }%
_{N}%
\end{array}%
\right] \left[
\begin{array}{c}
\mathbf{H}_{1}^{T} \\
\vdots \\
\mathbf{H}_{N}^{T}%
\end{array}%
\right] \left[
\begin{array}{ccc}
\mathbf{\Delta }_{1}\widetilde{\mathbf{\gamma}}_{1}\ldots\mathbf{\Delta }%
_{N}\widetilde{\mathbf{\gamma}}_{N}%
\end{array}%
\right].\notag
\end{align}\vspace*{-0mm}%
Following the similar signal processing procedures as in (\ref{P1})--(\ref{P3}), the average achievable rate per user under the BS and users analog beamforming errors can be expressed as%
\begin{equation}
\vspace*{-0.5mm}
\widehat{R}_{\mathrm{HB}}^{\mathrm{BE}}=\mathrm{E}_{\mathbf{H}_{\mathrm{S}},\mathrm{\Delta \theta }_{\mathrm{MS}}\mathrm{,\Delta \theta }_{%
\mathrm{BS}}}\left\{ \log _{2}\left[ 1+\frac{(\widehat{\beta}^{\mathrm{BE}})^{2}E_{s}}{%
\sigma _{\mathrm{MS}}^{2}}\right] \right\} , \label{HBBFE}\vspace*{-0mm}
\end{equation}\vspace*{-0mm}%
where $\widehat{\beta}^{\mathrm{BE}}=\sqrt{\frac{1}{\mathrm{tr}[(\widehat{\mathbf{W}}_{\mathrm{eq}}^{\mathrm{BE}})(\widehat{\mathbf{W}}_{\mathrm{eq}}^{\mathrm{BE}})^{H}]}}$ is the transmission power normalization factor and $\widehat{\mathbf{W}}_{\mathrm{eq}}^{\mathrm{BE}}$ is the downlink ZF precoder based on $\mathbf{H}_{\mathrm{eq}}^{\mathrm{BE}}$.
From (\ref{HBBFE}), it is interesting to observe that the achievable rate performance is not bounded by above.
This is due to the fact that the impact of transceiver analog beamforming errors on different RF chains can be estimated by the pilot matrix $\mathbf{\Psi}$, treated as parts of the channel, and compensated by the digital ZF transmission.
Therefore, it is expected that our proposed scheme is robust against transceiver analog beamforming errors.

Observing Equations (\ref{HBBFE}) and (\ref{Eq_1520}), we see that the performance degradation caused by AoA estimation errors can be quantified by comparing power normalization factors $\widehat{\beta}^{\mathrm{BE}}$ to $\overline{\beta}$.
In particular, the hybrid system with hardware impairments incurs a power loss in the received SINR at the users.
However, it is difficult to provide an explicit mathematical expression to quantify the loss.
Therefore, we borrow the ``power loss'' concept from the literature of array signal processing.
Specifically, the power loss due to AoA estimation errors is related to the half power beamwidth (HPBW) \cite{Trees2002} and the HPBW can be approximated [p. 48, 33] %
\begin{equation}
\vspace*{-1.0mm}
\mathrm{HPBW} \approx \dfrac{1.782}{M},\vspace*{-0mm}
\end{equation}\vspace*{-0mm}%
where $M$ is the number of antennas equipped in the array.
Hence, we can approximate the impact of AoA estimation errors by introducing a power loss coefficient $\xi \in \left( 0,1\right]$, which is determined by the beam pattern of the array and the variance of the distribution of AoA estimation errors.
For example, if the variance of AoA estimation errors is assumed no larger than half of the HPBW of the antenna array\footnote{In general, this assumption is valid for practical communication systems with a sufficiently large number of antennas.}, i.e., $\varrho _{\mathrm{BS}}^{2}\leqslant \frac{1.782}{2M}$, the power loss coefficient is given by $\xi \approx 0.5$ according to the half power loss principle \cite{Trees2002}. 
With an increasing AoA estimation error variance $\varrho _{\mathrm{BS}}^{2}$, $\xi $ decreases significantly\footnote{The calculation of $\xi $ for different numbers of antennas and AoA errors is well studied for linear arrays and interested readers may refer to \cite{Trees2002} for a detailed discussion.}. 
With the help of $\xi$, we now can express the approximation of $\widehat{R}_{\mathrm{HB}}^{\mathrm{BE}}$, the average achievable rate per user under transceiver beamforming errors, in the large numbers of antennas regime as
\vspace*{-0mm}%
\begin{equation}
\vspace*{-1mm}
\hspace*{-0mm}
\widehat{R}_{\mathrm{HB}}^{\mathrm{BE}}\overset{M\rightarrow \infty }{%
{\approx }}\log _{2}\left\{ 1+\left[
\dfrac{\upsilon }{\upsilon +1}\frac{MP}{N}\xi +\dfrac{1}{\upsilon +1}\right] \dfrac{E_{s}}{\sigma _{\mathrm{MS}}^{2}}\right\}.\label{HPBW_D}\vspace*{+1mm}
\end{equation}\vspace*{-0mm}%
From (\ref{HPBW_D}), it is interesting to note that, the beamforming errors matrices $\mathbf{\Theta }_{k}$ and $\mathbf{\Delta }_{k}$ only lead to a certain power loss, which depends on $\xi$. Besides, this loss can be compensated at the expense of a higher transmit power, i.e., increase $E_{\mathrm{S}}$.

Now, we discuss the impact of phase errors, which are caused by the additive white Gaussian noise (AWGN) and the limited quantization resolution of the phase shifters.
At the users sides, phase errors of user $k$ are modeled by \cite{Ying2015}\vspace*{-0mm}
\begin{equation}
\vspace*{-0mm}
\mathbf{\Lambda }_{k}=\mathrm{diag}\left\{ e^{j\Delta \phi
_{k,p}}\right\}\in \mathbb{C}^{P\times P} ,\text{ }p\in \left\{ 1,\ldots ,P\right\},\vspace*{-0mm}
\end{equation}\vspace*{-0mm}%
where phase errors $\Delta \phi _{k,p}, \forall p$, are uniformly distributed over $\left[ -a,\text{ }a\right] \ $and $a>0$ is the maximal phase error of user $k$.
Similarly, for the phase shifters connected with the $k$-th RF chain in the BS, the associated phase errors are given by\vspace*{-0mm}
\begin{equation}
\vspace*{-0mm}
\mathbf{\Xi }_{k}=\mathrm{diag}\left\{ e^{j\Delta \psi
_{k,m}}\right\}\in\mathbb{C}^{M\times M} ,\text{ }m\in \left\{ 1,\ldots ,M\right\} ,\vspace*{-0mm}
\end{equation}\vspace*{-0mm}%
where the errors $\Delta\psi _{k,m}, \forall m$, are uniformly distributed over $\left[-b,\ b\right]$ and $b>0$ is the maximal phase error of the BS.
The property of $\mathbf{\Lambda }_{k}$ and $\mathbf{\Xi }_{k}$ can be expressed as\vspace*{-0.5mm}%
\begin{align}
\vspace*{-1mm}
&\mathrm{E}\left[ \mathbf{\Lambda }_{k}\right] =\underset{-a}{%
\overset{a}{\dint }}\,\dfrac{1}{2a}e^{j\Delta \phi _{k,p}}\,d\Delta \phi _{k,p}=%
\frac{\sin \left( a\right) }{a} \text{ and}\notag \\
&\mathrm{E}\left[
\mathbf{\Xi }_{k}\right] =\underset{-b}{\overset{b}{\dint }}%
\dfrac{1}{2b}e^{-j\Delta \psi _{k,m}}\,d\Delta \psi _{k,m}=\frac{\sin \left(
b\right)}{b}, \label{Prop}\vspace*{-1mm}
\end{align}\vspace*{-0mm}%
respectively. Then, the received pilot symbols $\widehat{\mathbf{s}}_{k}^{T}$ used for the equivalent channel estimation at the $k$-th RF chain, which is under the impact of phase errors and transceiver analog beamforming errors, can be expressed as\vspace*{-0.0mm}%
\begin{equation}
\vspace*{-1mm}
\widehat{\mathbf{s}}_{k}^{T}=\widetilde{\mathbf{\gamma}}_{k}^{T}\mathbf{\Delta }%
_{k}\mathbf{\Xi }_{k}\overset{N}{\underset{i=1}{\dsum }}\mathbf{H%
}_{i}\mathbf{\Lambda }_{i}\mathbf{\Theta }_{i}\widetilde{\mathbf{\omega }}_{%
i}^{\ast }\sqrt{E_{\mathrm{P}}}\mathbf{\Phi }_{i}^{T}+\underset{\mathrm{effective}\text{ }\mathrm{noise}}{\underbrace{\widehat{\mathbf{z}}_{\mathrm{eq}}}}.\vspace*{-1mm}
\end{equation}\vspace*{-0.0mm}%
Similarly, we can express the average achievable rate per user as\vspace*{-0.0mm}%
\begin{equation}
\vspace*{-1.0mm}
\widehat{R}_{\mathrm{HB}}=\mathrm{E}_{\mathbf{H}_{\mathrm{S}}\mathrm{,\Delta
\phi ,\Delta \psi ,\Delta \theta }_{\mathrm{MS}}\mathrm{,\Delta \theta }_{%
\mathrm{BS}}}\left\{ \log _{2}\left[ 1+\frac{\widehat{\beta }^{2}E_{s}}{%
\sigma _{\mathrm{MS}}^{2}}\right] \right\} ,\vspace*{-1mm}
\end{equation}\vspace*{-0.00mm}%
where $\widehat{\beta}$ is the transmission power normalization factor under phase errors and transceiver beamforming errors.
Based on (\ref{HSUB_LA}), (\ref{HPBW_D}), and (\ref{Prop}), the approximation of the average achievable rate per user $\widehat{R}_{\mathrm{HB}}$ in the large numbers of antennas regime is given by\vspace*{-0mm}%
\begin{equation}
\vspace*{-2mm}
\hspace*{-1mm}
\widehat{R}_{\mathrm{HB}}\overset{M\rightarrow \infty }{\approx }\log
_{2}\left\{ 1+ \left[  \dfrac{\upsilon}{%
\upsilon+1}\frac{MP}{N}\widehat{\xi} +\dfrac{1}{\upsilon+1}%
 \right] \dfrac{E_{s}}{\sigma _{\mathrm{MS}}^{2}}\right\},
\label{QHI}\vspace*{-0mm}
\end{equation}\vspace*{-0.0mm}%
where $\widehat{\xi}=\left( \frac{\sin \left( a\right) }{a}\right) ^{2}\left( \frac{\sin \left( b\right) }{b}\right) ^{2}\xi$ is the equivalent power loss coefficient.
From Equation (\ref{QHI}), we note that the joint impact of random phase errors and transceiver beamforming errors on the average achievable rate cause performance degradation compared to the case of perfect hardware. Besides, using extra transmission power can compensate the performance degradation.

\vspace*{-0mm}
\subsection{Hardware Impairment and  Imperfect
Channel Estimation}
In this section, we further study the system performance by jointly taking account the impact of hardware impairments and equivalent channel estimation errors caused by noise.
The estimated equivalent channel $\widetilde{\mathbf{H}}_{\mathrm{eq}}^{T}$ can be expressed as\vspace*{-0mm}
\begin{equation}
\vspace*{-0.5mm}
\widetilde{\mathbf{H}}_{\mathrm{eq}}^{T}=\widehat{\mathbf{H}}_{\mathrm{eq}%
}^{T}+\Delta \widehat{\mathbf{H}}_{\mathrm{eq}}^{T}\mathbf{,}\vspace*{-0.5mm}
\end{equation}\vspace*{-0mm}%
where $\widehat{\mathbf{H}}_{\mathrm{eq}}$ is the equivalent channel under random phase errors and transceiver beamforming errors and the entries of the normalized channel estimation error $\frac{1}{\sqrt{MP}}\Delta \widehat{\mathbf{H}}_{\mathrm{eq}}^{T}$ are modeled by i.i.d. random variables with distribution $\mathcal{CN}\left( 0,\delta ^{2}\right)$, $\delta ^{2}=\mathrm{MSE}_{\mathrm{eq}}$, which is given by Equation (\ref{MSE_NN}).
The ZF precoding matrix for the considered hybrid system based on the imperfect CSI is given by\vspace*{-0mm}%
\begin{align}
\vspace*{-2mm}
\widetilde{\mathbf{W}}_{\mathrm{eq}}&=\widetilde{\mathbf{H}}_{\mathrm{eq}%
}^{\ast }\left[ \widetilde{\mathbf{H}}_{\mathrm{eq}}^{T}\widetilde{\mathbf{H}%
}_{\mathrm{eq}}^{\ast }\right] ^{-1}=\widehat{\mathbf{W}}_{\mathrm{eq}%
}+\Delta \widehat{\mathbf{W}}_{\mathrm{eq}}\notag \\
&=\left[
\begin{array}{ccc}
\widetilde{\mathbf{w}}_{\mathrm{eq,}1} &\cdots& \widetilde{\mathbf{w}}_{%
\mathrm{eq,}N}%
\end{array}%
\right] ,\vspace*{-1mm}
\end{align}\vspace*{-0mm}%
where $\widetilde{\mathbf{w}}_{\mathrm{eq,}k}\in\mathbb{C}^{N\times 1}$ is the $k$-th column vector of $\widetilde{\mathbf{W}}_{\mathrm{eq}}$ and $\widehat{\mathbf{W}}_{\mathrm{eq}}$ is the precoder based on $\widehat{\mathbf{H}}_{\mathrm{eq}}$. 
The received signal at user $k$ under imperfect CSI is given by\vspace*{-0mm}%
\begin{equation}
\vspace*{-1mm}
\widetilde{y}_{\mathrm{ZF}}^{k}=\underset{\mathrm{desired}\text{ }\mathrm{%
signal}}{\underbrace{\widetilde{\beta }x_{k}}}+\underset{\mathrm{intra-cell}%
\text{ }\mathrm{interference}}{\underbrace{\widetilde{\beta }\widehat{%
{\mathbf{\omega }}}_{k}^{H}\mathbf{H}_{k}^{T}\widehat{\mathbf{F}}_{%
\mathrm{RF}}\Delta \widehat{\mathbf{W}}_{\mathrm{eq}}\mathbf{x}}}+z_{k},\vspace*{-0mm}
\end{equation}\vspace*{-0mm}%
where $\widetilde{\beta }=\sqrt{\tfrac{1}{\mathrm{tr}\left( \widetilde{\mathbf{W}}_{\mathrm{eq}}\widetilde{\mathbf{W}}_{\mathrm{eq}}^{H}\right) }}$ is the power normalization factor. $\Delta \widehat{\mathbf{w}}_{\mathrm{eq,}j}\in\mathbb{C}^{M\times 1}$ denotes the $j$-th column vector of the ZF precoder error matrix $\Delta\widehat{\mathbf{W}}_{\mathrm{eq}}\mathbf{=}\widetilde{\mathbf{W}}_{\mathrm{eq}}-\widehat{\mathbf{W}}_{\mathrm{eq}}$ and $\mathbf{x}=[x_{1},\,x_{2},\ldots ,$ $x_{N}]^{T}$ denotes the transmitted signal for all users.
We then express the SINR of user $k$ as \vspace*{-0mm}
\begin{equation}
\hspace*{-2mm}
\widetilde{\mathrm{SINR}}_{\mathrm{ZF}}^{k}=\frac{E_{s}{\widetilde{\beta }^{2}}%
}{{\widetilde{\beta }^{2}}E_{s}\widehat{\mathbf{h}}_{\mathrm{eq,}k}^{T}\mathrm{%
E}_{\mathrm{\Delta}\widehat{\mathrm{\mathbf{H}}}_{\mathrm{eq}%
}}\left[ \Delta \widehat{\mathbf{W}}_{\mathrm{eq}}\Delta \widehat{\mathbf{W}}%
_{\mathrm{eq}}^{H}\right] \widehat{\mathbf{h}}_{\mathrm{eq,}k}^{\ast
}+{\sigma_{\mathrm{MS}}^{2}}}.  \label{SINR_with_error}
\end{equation}\vspace*{-0mm}%
Now we summarize the achievable rate per user in the high SNR regime in the following theorem.
\begin{theo}\vspace{-0.0mm}
As the receive $\mathrm{SNR=}\frac{E_{s}}{\sigma _{\mathrm{MS}}^{2}}$ approaches infinity, the approximated achievable rate of user $k$ of ZF precoding under imperfect hybrid CSI is given by\vspace*{-0mm}
\begin{align}
\hspace*{-2mm}
\widetilde{R}_{\mathrm{ZF},k}&\approx\log _{2} \left\{ 1+\left[ \left(
\sqrt{1+\delta ^{2}}-1\right) ^{2}\right. \right.\notag \\
&-2\sqrt{1+\delta ^{2}}\left( \sqrt{1+\delta ^{2}}-1\right) \delta ^{2}N\eta _{kk}\notag \\
& \left. \left. +\left( \sqrt{1+\delta ^{2}}\right) \left( 2-\sqrt{1+\delta
^{2}}\right) \delta ^{2}\mathrm{tr}\left( \mathbf{K}^{-1}\right) \right]
^{-1}\right\} ,  \label{Theo_IP_RL}\vspace*{-1mm}
\end{align}\vspace{-0.0mm}%
where $\mathbf{K}=\widehat{\mathbf{H}}_{\mathrm{eq}}^{T}\widehat{\mathbf{H}}%
_{\mathrm{eq}}^{\ast }$, $\eta _{kk}$ represents the $k$-th diagonal element
of $\mathbf{K}^{-1}$, and $\delta ^{2}$ is the normalized MSE of channel estimation given by Equation (\ref{MSE_NN}).
\end{theo}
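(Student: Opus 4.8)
The plan is, first, to collapse the achievable rate to a single deterministic functional by taking the high‑SNR limit, and then to evaluate that functional by perturbing the imperfect‑CSI ZF precoder around the ideal one and averaging over the Gaussian equivalent‑channel estimation error. \textbf{Step 1 (high‑SNR reduction).} In the SINR expression (\ref{SINR_with_error}), divide numerator and denominator by $\widetilde{\beta}^{2}E_{s}$. Since $\sigma_{\mathrm{MS}}^{2}/(\widetilde{\beta}^{2}E_{s})\rightarrow 0$ as $\mathrm{SNR}=E_{s}/\sigma_{\mathrm{MS}}^{2}\rightarrow\infty$, we get $\widetilde{\mathrm{SINR}}_{\mathrm{ZF}}^{k}\rightarrow 1/\mathcal{I}_{k}$, where $\mathcal{I}_{k}=\widehat{\mathbf{h}}_{\mathrm{eq},k}^{T}\,\mathrm{E}_{\Delta\widehat{\mathbf{H}}_{\mathrm{eq}}}[\Delta\widehat{\mathbf{W}}_{\mathrm{eq}}\Delta\widehat{\mathbf{W}}_{\mathrm{eq}}^{H}]\,\widehat{\mathbf{h}}_{\mathrm{eq},k}^{\ast}$; hence $\widetilde{R}_{\mathrm{ZF},k}\rightarrow\log_{2}(1+1/\mathcal{I}_{k})$. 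It therefore suffices to show that $\mathcal{I}_{k}$ equals the bracketed quantity in (\ref{Theo_IP_RL}). A convenient byproduct of the limit is that the power‑normalization factor $\widetilde{\beta}$ cancels, so the awkward expectation of $\mathrm{tr}(\widetilde{\mathbf{W}}_{\mathrm{eq}}\widetilde{\mathbf{W}}_{\mathrm{eq}}^{H})$ never needs to be computed.

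\textbf{Step 2 (precoder‑error identity and expansion).} Because $\widehat{\mathbf{H}}_{\mathrm{eq}}$ is $N\times N$ and invertible almost surely, $\widehat{\mathbf{W}}_{\mathrm{eq}}=(\widehat{\mathbf{H}}_{\mathrm{eq}}^{T})^{-1}$ and $\widetilde{\mathbf{W}}_{\mathrm{eq}}=(\widehat{\mathbf{H}}_{\mathrm{eq}}^{T}+\Delta\widehat{\mathbf{H}}_{\mathrm{eq}}^{T})^{-1}$. Using the ZF identities $\widehat{\mathbf{H}}_{\mathrm{eq}}^{T}\widehat{\mathbf{W}}_{\mathrm{eq}}=\widetilde{\mathbf{H}}_{\mathrm{eq}}^{T}\widetilde{\mathbf{W}}_{\mathrm{eq}}=\mathbf{I}$ (cf.\ (\ref{P1})--(\ref{P3})) one obtains the exact relation $\widehat{\mathbf{H}}_{\mathrm{eq}}^{T}\Delta\widehat{\mathbf{W}}_{\mathrm{eq}}=-\Delta\widehat{\mathbf{H}}_{\mathrm{eq}}^{T}\widetilde{\mathbf{W}}_{\mathrm{eq}}$, so that $\mathcal{I}_{k}=\mathrm{E}[\|\Delta\widehat{\mathbf{h}}_{\mathrm{eq},k}^{T}\widetilde{\mathbf{W}}_{\mathrm{eq}}\|^{2}]$, with $\Delta\widehat{\mathbf{h}}_{\mathrm{eq},k}^{T}$ the $k$-th row of $\Delta\widehat{\mathbf{H}}_{\mathrm{eq}}^{T}$. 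Equivalently, apply the matrix‑inversion identity $(\mathbf{B}+\mathbf{E})^{-1}=\mathbf{B}^{-1}-\mathbf{B}^{-1}\mathbf{E}\mathbf{B}^{-1}+\mathbf{B}^{-1}\mathbf{E}\mathbf{B}^{-1}\mathbf{E}\mathbf{B}^{-1}-\cdots$ with $\mathbf{B}=\widehat{\mathbf{H}}_{\mathrm{eq}}^{T}$, $\mathbf{E}=\Delta\widehat{\mathbf{H}}_{\mathrm{eq}}^{T}$, so that $\Delta\widehat{\mathbf{W}}_{\mathrm{eq}}=-\mathbf{B}^{-1}\mathbf{E}\mathbf{B}^{-1}+\mathbf{B}^{-1}\mathbf{E}\mathbf{B}^{-1}\mathbf{E}\mathbf{B}^{-1}-\cdots$.

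\textbf{Step 3 (Gaussian averaging and collecting terms).} Insert this expansion into $\mathcal{I}_{k}$ and average over $\Delta\widehat{\mathbf{H}}_{\mathrm{eq}}$, whose $\tfrac{1}{\sqrt{MP}}$‑normalized entries are i.i.d.\ $\mathcal{CN}(0,\delta^{2})$ with $\delta^{2}$ given by (\ref{MSE_NN}). The elementary tools are the isotropy identity $\mathrm{E}[\mathbf{E}\mathbf{A}\mathbf{E}^{H}]=\delta^{2}\,\mathrm{tr}(\mathbf{A})\,\mathbf{I}$ (up to the $MP$ normalization), from which the leading contribution to $\mathcal{I}_{k}$ is $\delta^{2}\,\mathrm{tr}(\widehat{\mathbf{W}}_{\mathrm{eq}}^{H}\widehat{\mathbf{W}}_{\mathrm{eq}})=\delta^{2}\,\mathrm{tr}(\mathbf{K}^{-1})$ with $\mathbf{K}=\widehat{\mathbf{H}}_{\mathrm{eq}}^{T}\widehat{\mathbf{H}}_{\mathrm{eq}}^{\ast}$; and a leave‑one‑row‑out split of $\mathbf{E}$ into its $k$-th row (which also appears inside $\widetilde{\mathbf{W}}_{\mathrm{eq}}$) and the remaining, independent rows: the self‑interaction of the $k$-th row produces the diagonal element $\eta_{kk}=[\mathbf{K}^{-1}]_{kk}$ (hence the $N\eta_{kk}$ term), while the self‑term $\widehat{\mathbf{h}}_{\mathrm{eq},k}^{T}\Delta\widehat{\mathbf{w}}_{\mathrm{eq},k}$ (the deviation of the effective desired‑signal gain from one) supplies the ``bias'' term. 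Collecting the first‑ and second‑order pieces and using the large‑antenna orthogonality $\tfrac{1}{MP}\mathbf{F}_{\mathrm{RF}}^{H}\mathbf{F}_{\mathrm{RF}}\rightarrow\mathbf{I}_{N}$ from Corollary~1 to close the series — the key point being that, after averaging the estimation error, the per‑user energy of the estimated equivalent channel $\widetilde{\mathbf{H}}_{\mathrm{eq}}$ is inflated from $1$ to $1+\delta^{2}$, which is exactly what produces the factors $\sqrt{1+\delta^{2}}$ whenever one switches between the $\widehat{\mathbf{H}}_{\mathrm{eq}}$‑ and $\widetilde{\mathbf{H}}_{\mathrm{eq}}$‑normalizations — yields $\mathcal{I}_{k}\approx(\sqrt{1+\delta^{2}}-1)^{2}-2\sqrt{1+\delta^{2}}(\sqrt{1+\delta^{2}}-1)\delta^{2}N\eta_{kk}+\sqrt{1+\delta^{2}}(2-\sqrt{1+\delta^{2}})\delta^{2}\,\mathrm{tr}(\mathbf{K}^{-1})$, the discarded terms being of higher order and absorbed into the ``$\approx$''. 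Substituting into $\log_{2}(1+1/\mathcal{I}_{k})$ gives (\ref{Theo_IP_RL}).

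\textbf{Main obstacle.} Steps 1--2 are routine; the real work is in Step 3. Because the ZF precoder depends nonlinearly on the estimation error, $\mathrm{E}_{\Delta\widehat{\mathbf{H}}_{\mathrm{eq}}}[\Delta\widehat{\mathbf{W}}_{\mathrm{eq}}\Delta\widehat{\mathbf{W}}_{\mathrm{eq}}^{H}]$ has no elementary exact form, so one must (i) truncate the matrix‑inversion series at the order that still contributes to the $\sqrt{1+\delta^{2}}$‑dressed terms (through order $\delta^{4}$) while discarding the rest, (ii) carry out the Gaussian fourth‑moment bookkeeping cleanly — the leave‑one‑row‑out decomposition being what separates the $\eta_{kk}$ contribution from the $\mathrm{tr}(\mathbf{K}^{-1})$ contribution — and (iii) justify the resummation into the stated closed form via the large‑antenna regime, which is precisely where the approximation in the theorem statement is incurred. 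Everything else is linear algebra and first‑moment computations.
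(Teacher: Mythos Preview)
Your high-SNR reduction (Step~1) and the use of a perturbation expansion of the precoder (Step~2) are exactly what the paper does. The paper also works with the first-order approximation $(\mathbf{K}+\mathbf{D})^{-1}\approx\mathbf{K}^{-1}-\mathbf{K}^{-1}\mathbf{D}\mathbf{K}^{-1}$ (with $\mathbf{K}=\widehat{\mathbf{H}}_{\mathrm{eq}}^{T}\widehat{\mathbf{H}}_{\mathrm{eq}}^{\ast}$ and $\mathbf{D}$ the cross-terms), computes $\mathrm{E}_{\Delta\widehat{\mathbf{H}}_{\mathrm{eq}}}[\Delta\widehat{\mathbf{W}}_{\mathrm{eq}}\Delta\widehat{\mathbf{W}}_{\mathrm{eq}}^{H}]$ term by term, and then sandwiches with $\widehat{\mathbf{h}}_{\mathrm{eq},k}$ to produce the $\eta_{kk}$ and $\mathrm{tr}(\mathbf{K}^{-1})$ pieces. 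So the skeleton matches.

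The one place where your route diverges from the paper---and where there is a real risk of a gap---is the origin of the $\sqrt{1+\delta^{2}}$ factors. You describe them as \emph{emergent}: ``the per-user energy of $\widetilde{\mathbf{H}}_{\mathrm{eq}}$ is inflated from $1$ to $1+\delta^{2}$, which is exactly what produces the factors $\sqrt{1+\delta^{2}}$.'' In the paper these factors are not emergent at all; they are \emph{inserted by hand} at the very first line of the expansion, via the normalized precoder-error definition
\[
\Delta\widehat{\mathbf{W}}_{\mathrm{eq}}=\sqrt{1+\delta^{2}}\,(\widehat{\mathbf{H}}_{\mathrm{eq}}^{\ast}+\Delta\widehat{\mathbf{H}}_{\mathrm{eq}}^{\ast})\bigl[(\widehat{\mathbf{H}}_{\mathrm{eq}}+\Delta\widehat{\mathbf{H}}_{\mathrm{eq}})^{T}(\widehat{\mathbf{H}}_{\mathrm{eq}}^{\ast}+\Delta\widehat{\mathbf{H}}_{\mathrm{eq}}^{\ast})\bigr]^{-1}-\widehat{\mathbf{H}}_{\mathrm{eq}}^{\ast}\mathbf{K}^{-1},
\]
i.e., the imperfect precoder is first rescaled so that the underlying channel $\widetilde{\mathbf{H}}_{\mathrm{eq}}/\sqrt{1+\delta^{2}}$ has the same per-entry variance as $\widehat{\mathbf{H}}_{\mathrm{eq}}$. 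Once this ansatz is in place, the three coefficients $(\sqrt{1+\delta^{2}}-1)^{2}$, $-2\sqrt{1+\delta^{2}}(\sqrt{1+\delta^{2}}-1)$ and $\sqrt{1+\delta^{2}}(2-\sqrt{1+\delta^{2}})$ drop out by simply squaring $\sqrt{1+\delta^{2}}(\cdot)-(\cdot)$ and taking Gaussian second moments; no fourth-moment bookkeeping, Neumann resummation, or leave-one-row-out splitting is needed. If instead you run your unnormalized Neumann series $\Delta\widehat{\mathbf{W}}_{\mathrm{eq}}=-\mathbf{B}^{-1}\mathbf{E}\mathbf{B}^{-1}+\cdots$ and average, the leading term gives $\delta^{2}\,\mathrm{tr}(\mathbf{K}^{-1})$ with coefficient $1$, not $\sqrt{1+\delta^{2}}(2-\sqrt{1+\delta^{2}})$, and there is no obvious mechanism by which higher-order terms would resum into precisely those $\sqrt{1+\delta^{2}}$-dressed prefactors. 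So either adopt the paper's explicit $\sqrt{1+\delta^{2}}$ renormalization at the outset (which makes Step~3 a two-line second-moment computation), or be prepared to show concretely how your series resummation reproduces those exact coefficients---your current sketch does not.
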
\vspace{-0mm}
\begin{proof}
Please refer to Appendix B.
\end{proof}
\begin{coro}\vspace{-0.0mm}%
In the large numbers of antennas regime, i.e., $M\rightarrow \infty$, %
the asymptotic average achievable rate per user of the hybrid system is approximated by\vspace*{-1mm}
\begin{align}
\hspace*{-2mm}
\widetilde{R}_{\mathrm{ZF}}& \approx \log _{2}\left\{ 1+\left[ \left(
\sqrt{1+\delta ^{2}}-1\right) ^{2}\right. \right.  \notag \\
&-2\sqrt{1+\delta ^{2}}\left( \sqrt{%
1+\delta ^{2}}-1\right)\frac{\delta ^{2}N}{\widehat{\xi} MP}\frac{%
\upsilon +1}{\upsilon } \notag \\
& \left. \left. +\left( \sqrt{1+\delta ^{2}}\right) \left( 2-\sqrt{1+\delta
^{2}}\right)\frac{\delta ^{2}N}{\widehat{\xi} MP}\frac{\upsilon +1}{%
\upsilon }\right] ^{-1}\right\} .  \label{Coro_RZFK}\vspace*{-3mm}
\end{align}
\end{coro}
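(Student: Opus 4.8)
The plan is to derive (\ref{Coro_RZFK}) directly from the finite-$M$ expression (\ref{Theo_IP_RL}) by substituting the large-antenna limits of the two channel-dependent quantities that appear there, namely $\eta_{kk}$ and $\mathrm{tr}(\mathbf{K}^{-1})$, with $\mathbf{K}=\widehat{\mathbf{H}}_{\mathrm{eq}}^{T}\widehat{\mathbf{H}}_{\mathrm{eq}}^{\ast}$ and $\widehat{\mathbf{H}}_{\mathrm{eq}}$ the equivalent channel under random phase errors and transceiver beamforming errors. Since (\ref{Theo_IP_RL}) already holds in the high-SNR regime, the only remaining work is an asymptotic evaluation of $\mathbf{K}$ as $M\rightarrow\infty$.

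First I would establish that $\mathbf{K}$ becomes asymptotically diagonal. Each row $\widehat{\mathbf{h}}_{\mathrm{eq},k}^{T}$ equals the product of the per-user analog combiner (carrying the phase-error and AoA-error diagonal perturbations $\mathbf{\Lambda}_{k}$ and $\mathbf{\Theta}_{k}$), the channel $\mathbf{H}_{k}^{T}$, and the perturbed BS beamforming columns $\mathbf{\Delta}_{i}\widetilde{\bm{\gamma}}_{i}$. Because distinct users' strongest AoAs are separated by many wavelengths, the cross terms $\widehat{\mathbf{h}}_{\mathrm{eq},k}^{T}\widehat{\mathbf{h}}_{\mathrm{eq},j}^{\ast}$, $j\neq k$, concentrate around zero relative to the diagonal entries by the same channel-hardening mechanism that yields $\mathbf{F}_{\mathrm{RF}}^{H}\mathbf{F}_{\mathrm{RF}}\overset{a.s.}{\rightarrow}\mathbf{I}_{N}$; the diagonal perturbations $\mathbf{\Lambda}_{k},\mathbf{\Theta}_{k},\mathbf{\Xi}_{k},\mathbf{\Delta}_{k}$ do not affect this, as they only rescale components by unit-modulus factors. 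Hence $\eta_{kk}\rightarrow 1/K_{kk}$ and $\mathrm{tr}(\mathbf{K}^{-1})\rightarrow\sum_{k}1/K_{kk}$.

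Next I would identify the common diagonal value $K_{kk}=\|\widehat{\mathbf{h}}_{\mathrm{eq},k}\|^{2}$. This is exactly the effective squared channel gain whose large-$M$ value was computed in deriving (\ref{HPBW_D}) and (\ref{QHI}): $K_{kk}\rightarrow\widehat{\xi}\,MP\,\tfrac{\upsilon}{\upsilon+1}+\tfrac{1}{\upsilon+1}$, where $\widehat{\xi}=\bigl(\tfrac{\sin a}{a}\bigr)^{2}\bigl(\tfrac{\sin b}{b}\bigr)^{2}\xi$. Keeping only the dominant $O(M)$ term $\widehat{\xi}MP\tfrac{\upsilon}{\upsilon+1}$ and discarding the $O(1)$ term $\tfrac{1}{\upsilon+1}$ gives $\delta^{2}N\eta_{kk}\approx\delta^{2}\mathrm{tr}(\mathbf{K}^{-1})\approx\tfrac{\delta^{2}N}{\widehat{\xi}MP}\tfrac{\upsilon+1}{\upsilon}$, where $\delta^{2}=\mathrm{MSE}_{\mathrm{eq}}$ is given by (\ref{MSE_NN}). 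Substituting these two identical limits for the corresponding terms in (\ref{Theo_IP_RL}), and passing the expectation through $\log_{2}(\cdot)$ since the bracketed quantity concentrates around its deterministic limit as $M\rightarrow\infty$, produces (\ref{Coro_RZFK}).

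The main obstacle is the first step: rigorously showing that $\mathbf{K}$ is asymptotically diagonal with the hardware perturbations in place, i.e., controlling the off-diagonal inner products $\widehat{\mathbf{h}}_{\mathrm{eq},k}^{T}\widehat{\mathbf{h}}_{\mathrm{eq},j}^{\ast}$ uniformly in $k\neq j$ so that $\mathbf{K}^{-1}$ may be replaced by the inverse of its diagonal. Once this is granted, together with the diagonal asymptotics already established in (\ref{HPBW_D})--(\ref{QHI}), the remainder is a direct substitution into (\ref{Theo_IP_RL}).
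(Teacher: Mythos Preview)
Your proposal is correct and follows essentially the same route as the paper: the paper's proof is the single line ``substitute $\mathbf{K}\overset{a.s.}{\rightarrow}\widehat{\xi}\,MP\,\tfrac{\upsilon}{\upsilon+1}\mathbf{I}_{N}$ into (\ref{Theo_IP_RL}),'' which is exactly your two-step plan of (i) arguing $\mathbf{K}$ is asymptotically a scalar multiple of the identity and (ii) identifying that scalar as $\widehat{\xi}\,MP\,\tfrac{\upsilon}{\upsilon+1}$, then replacing $\eta_{kk}$ and $\mathrm{tr}(\mathbf{K}^{-1})$ accordingly. Your added justification for the off-diagonal decay and for dropping the $O(1)$ scattering term is more than the paper itself provides.
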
\vspace{-0.0mm}
\begin{proof}
The result follows by substituting $\mathbf{K}\underset{M\rightarrow \infty }{\overset{a.s.}{\rightarrow }}\widehat{\xi} MP \frac{\upsilon}{%
\upsilon+1}\mathbf{I}_N$ into (\ref{Theo_IP_RL}).
\end{proof}\vspace{+1mm}
Based on (\ref{QHI}) and (\ref{Coro_RZFK}), the additional achievable rate performance degradation, which is caused jointly by random phase errors and transceiver beamforming errors, is further summarized in the following corollary.
\begin{coro}\vspace{-0.0mm}%
In the large numbers of antennas regime, i.e., $M\rightarrow \infty$, the approximated achievable rate per user performance gap between the system with ideal hardware and the system under phase errors and transceiver beamforming errors is given by\vspace*{-0mm}
\begin{equation}
\vspace*{-0mm}
\Delta \mathrm{Gap}\approx \log _{2}\left[ \frac{1}{\widehat{\xi} }\right].  \label{Coro_GAPforHI}
\end{equation}
\end{coro}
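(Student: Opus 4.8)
The plan is to compute the gap directly from the two asymptotic expressions already established, namely the achievable rate under phase and beamforming errors in~\eqref{QHI} and the achievable rate under the additional channel estimation error in~\eqref{Coro_RZFK}, and to show that in the high-SNR / large-antenna regime all terms involving $\delta^2$ become negligible except for the leading $\tfrac{\upsilon}{\upsilon+1}\tfrac{MP}{N}\widehat{\xi}$ power term, so that~\eqref{Coro_RZFK} collapses to the same form as~\eqref{QHI} but with $\widehat\xi$ replaced by the ideal-hardware value $1$. First I would recall that $\delta^2 = \mathrm{MSE}_{\mathrm{eq}} = \frac{\sigma_{\mathrm{BS}}^2}{E_{\mathrm{P}}MP}$ by~\eqref{MSE_NN}, which tends to $0$ as $M\to\infty$; consequently $\sqrt{1+\delta^2}-1 \approx \tfrac{1}{2}\delta^2$, so $(\sqrt{1+\delta^2}-1)^2 = O(\delta^4)$ and the cross term $2\sqrt{1+\delta^2}(\sqrt{1+\delta^2}-1)\frac{\delta^2 N}{\widehat\xi MP}\frac{\upsilon+1}{\upsilon} = O(\delta^4/(MP))$ are both higher-order and vanish. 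The surviving term in the bracket of~\eqref{Coro_RZFK} is $(\sqrt{1+\delta^2})(2-\sqrt{1+\delta^2})\frac{\delta^2 N}{\widehat\xi MP}\frac{\upsilon+1}{\upsilon} \approx \frac{\delta^2 N}{\widehat\xi MP}\frac{\upsilon+1}{\upsilon}$, whose reciprocal is $\widehat\xi \cdot \frac{MP}{N}\frac{\upsilon}{\upsilon+1}\cdot\frac{1}{\delta^2}$.

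Next I would observe that substituting $\delta^2 = \frac{\sigma_{\mathrm{BS}}^2}{E_{\mathrm{P}}MP}$ makes this reciprocal equal to $\widehat\xi \cdot \frac{MP}{N}\frac{\upsilon}{\upsilon+1}\cdot\frac{E_{\mathrm{P}}MP}{\sigma_{\mathrm{BS}}^2}$, which plays the role of the effective receive SNR seen in~\eqref{QHI}; the only place the hardware-impairment coefficient $\widehat\xi$ enters is as a multiplicative factor on this dominant term. The corresponding ideal-hardware quantity is obtained by setting $\widehat\xi = 1$ (no phase error, no beamforming error, $\xi = 1$), i.e. the bracket in the ideal case is $\frac{\delta^2 N}{MP}\frac{\upsilon+1}{\upsilon}$ with reciprocal $\frac{MP}{N}\frac{\upsilon}{\upsilon+1}\cdot\frac{1}{\delta^2}$. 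Writing $\widetilde{R}_{\mathrm{ZF}} \approx \log_2(1 + A\widehat\xi)$ and $\widetilde{R}_{\mathrm{ZF}}^{\mathrm{ideal}} \approx \log_2(1 + A)$ with the common factor $A = \frac{MP}{N}\frac{\upsilon}{\upsilon+1}\frac{1}{\delta^2} \to \infty$, the gap is
\begin{equation}
\Delta\mathrm{Gap} = \log_2(1+A) - \log_2(1+A\widehat\xi) \underset{M\to\infty}{\approx} \log_2\!\left(\frac{A}{A\widehat\xi}\right) = \log_2\!\left[\frac{1}{\widehat\xi}\right],
\end{equation}
which is exactly~\eqref{Coro_GAPforHI}.

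The main obstacle is justifying that it is legitimate to drop the $(\sqrt{1+\delta^2}-1)^2$ term and the cross term from the bracket in~\eqref{Coro_RZFK} while retaining the third term, since all three are individually small; the point is that the third term is $\Theta(\delta^2/(MP))$ whereas the other two are $\Theta(\delta^4)$ and $\Theta(\delta^4/(MP))$ respectively, so after inverting, the bracket is dominated by the third term and the correction from the other two is relatively $O(\delta^2)$, vanishing in the limit. A secondary technical point is the high-SNR step: the argument $A\widehat\xi$ of the logarithm must diverge, which it does because $A = \frac{E_{\mathrm{P}}(MP)^2}{N\sigma_{\mathrm{BS}}^2}\frac{\upsilon}{\upsilon+1} \to \infty$ and $\widehat\xi \in (0,1]$ is bounded below by a positive constant under the standing assumption $\varrho_{\mathrm{BS}}^2 \le \frac{1.782}{2M}$ (giving $\xi \approx 0.5$) together with fixed maximal phase errors $a,b$. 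Under these conditions $\log_2(1+A\widehat\xi) = \log_2(A\widehat\xi) + o(1)$ and likewise for the ideal case, so the difference of logarithms telescopes to $\log_2(1/\widehat\xi)$ up to vanishing terms, completing the argument.
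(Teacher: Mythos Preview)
Your proposal is correct and follows the same route as the paper, which simply states that the result ``comes after some mathematical manipulation on (\ref{QHI}) and (\ref{Coro_RZFK}).'' You have supplied the details the paper omits: identifying the dominant term in the bracket of (\ref{Coro_RZFK}), inverting it, and then taking the high-SNR difference of logarithms so that only the multiplicative factor $\widehat\xi$ survives. One minor remark: the gap can be read off even more directly from (\ref{QHI}) alone by comparing it to its ideal-hardware counterpart (\ref{HSUB_LA}) (i.e., (\ref{QHI}) with $\widehat\xi=1$), since for large $M$ the term $\tfrac{\upsilon}{\upsilon+1}\tfrac{MP}{N}$ dominates $\tfrac{1}{\upsilon+1}$ and the same $\log_2(1/\widehat\xi)$ drops out immediately; your more elaborate reduction of (\ref{Coro_RZFK}) then shows that the \emph{same} gap persists when channel estimation error is also present, which is exactly the point the paper makes in the discussion of Figure~\ref{fig:HWI_CSIerror}.
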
\vspace{-0.0mm}
\begin{proof}
The result comes after some mathematical manipulation on (\ref{QHI}) and (\ref{Coro_RZFK}).
\end{proof}

\vspace*{-0mm}
\section{Simulation and Discussion}
\begin{figure}[t]
\centering\vspace{-1mm}
\includegraphics[width=3.5in]{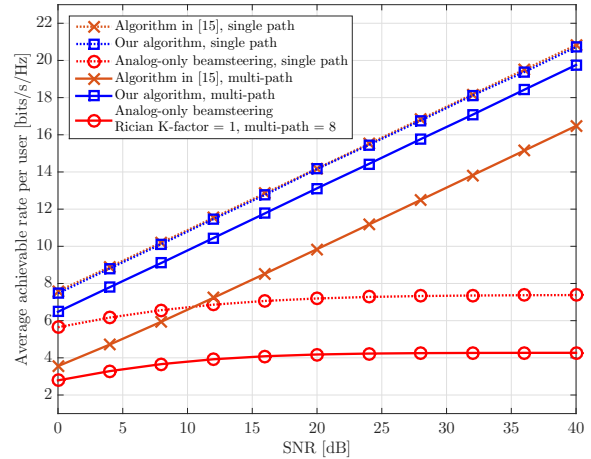}\vspace{-0mm}
\caption{The average achievable rate per user (bits/s/Hz) versus SNR for  algorithm proposed in \cite{Alkhateeb2015}
 and our proposed algorithm.}
\label{fig:Comp_HB_ALH_ours}\vspace{-1mm}
\end{figure}
In this section, we present further numerical results to validate our analysis.
We consider a small single-cell hybrid mmWave system.

In Figure \ref{fig:Comp_HB_ALH_ours}, we compare the achievable rates using the proposed algorithm and the algorithm proposed by \cite{Alkhateeb2015} for sparse and non-sparse mmWave channels.
We assume perfect channel estimation with $M=100,$ $N=4,$ and $P=16$. For non-sparse mmWave channels, $\upsilon_{k} =1,\forall k$.
Firstly, we illustrate the effectiveness of the proposed channel estimation algorithm over non-sparse mmWave channels.
For sparse single-path channels, the achievable rate of the proposed algorithm matches with the algorithm proposed in \cite{Alkhateeb2015}.
For non-sparse mmWave channels, e.g. with the number of multi-paths $N_{\mathrm{l}}=8$, we observe that the proposed algorithm achieves a better system performance than that of the algorithm proposed in \cite{Alkhateeb2015}.
The reasons are two-fold. Firstly, the proposed algorithm can effectively reduce the MU interference via the proposed analog beamformers and digital ZF precoder.
In particular, the analog beamformers adopted at the BS and the users allow transmission and receiving along the strongest AoA directions, respectively.
Therefore, the interference outside the strongest AoA directions is reduced.
Secondly, the digital ZF precoder designed based on the equivalent channels can remove the MU interference within the strongest AoAs.
We note that since the analog beamforming only allows the transmission along the strongest AoA directions, the proposed scheme can utilize the transmission power more efficiently.
In contrast, the algorithm proposed in \cite{Alkhateeb2015}, which aims to maximize the desired signal energy, does not suppress the MU interference as effective as our proposed algorithm.
Furthermore, Figure \ref{fig:Comp_HB_ALH_ours} also illustrates that a significant achievable rate gain can be brought by the proposed channel estimation and the digital ZF precoding over a simple analog-only beamforming steering scheme.

In Figure \ref{fig:HWI_CSIerror}, we illustrate two sets of comparisons to validate our derived results in Equations (\ref{QHI}) and (\ref{Coro_RZFK}), and to show the impact of hardware impairments on system performance.
In the simulations, we set the Rician K-factor as $\upsilon_{k} =2, \forall k$, the number of antennas equipped at the BS is $M=100$, the number of antennas equipped at each user is $P=8$, and the number of users is $N=8$.
We assume that the maximum phase error value of phase shifters is $\sigma _{\Delta }=3$ degrees and the variance of AoA estimation errors at the BS side is $\varrho _{\mathrm{BS}}^{2}= \frac{1.782}{2M}$.
In this case, the equivalent coefficient is $\widehat{\xi} \approx 0.5$, which predicts that there is a $3$-dB loss in SNR caused by AoA estimation errors and phase errors.
First, we observe that even with transceiver beamforming errors and random phase errors, the achievable rate of the proposed scheme (curve $2$) scales with increasing SNR and is unbounded by above.
The small gap between the achievable rate per user with perfect hardware (curve $1$) and the achievable rate per user with hardware impairment (curve $2$) confirms the robustness of the proposed scheme against random phase errors as well as transceiver beamforming errors, which is predicted by Equation (\ref{QHI}).
In addition, the $3$ dB of extra power consumption caused by hardware impairments is also verified via the comparison between curves $1$ and $2$.

\begin{figure}[t]
\centering\vspace{-0mm}
\includegraphics[width=3.5in,]{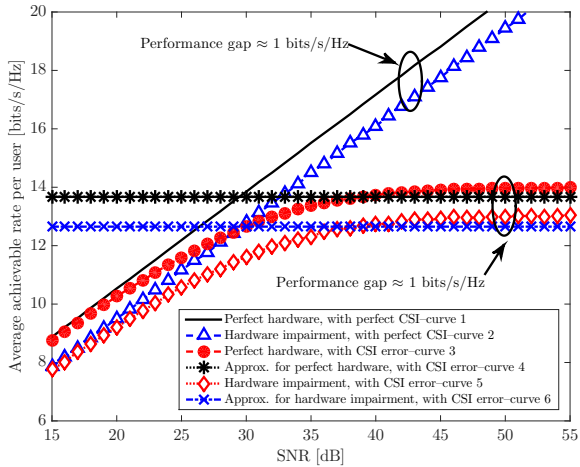}\vspace{-0mm}
\caption{The average achievable rate per user (bits/s/Hz) versus SNR for the hybrid system.}
\label{fig:HWI_CSIerror}\vspace{-1mm}
\end{figure}

Figure \ref{fig:HWI_CSIerror} also illustrates a comparison between the proposed scheme with imperfect equivalent CSI under perfect hardware (curve
$3$) and the proposed scheme with imperfect CSI under random phase errors and transceiver beamforming errors (curve $5$), which validates the correctness of Theorem $2$.
These two simulation curves assume an identical normalized CSI error variance $\delta^{2}$.
In particular, we set the normalized MSE of CSI as $\delta^{2}=0.005$, which is achieved by the proposed channel estimation scheme as shown in Figure \ref{fig:antiPC}.
In the high receive SNR and large numbers of antennas regimes, curve $3$ and curve $5$ can be asymptotically approximated by curve $4$ and curve $6$, respectively.
Interestingly, the performance gaps between curve $3$ and curve $5$ as well as curve $1$ and curve $2$ are approximately $\Delta \mathrm{Gap}\approx \log _{2}\left( \frac{1}{\widehat{\xi} }\right)= 1$ bits/s/Hz, which are accurately characterized by Equation (\ref{Coro_GAPforHI}).

Now, we evaluate the achievable rate versus the number of antennas in Figure \ref{fig:Comp_versus_No_antennas}.
The setup in Figure \ref{Fig004} is considered at receive $\mathrm{SNR}=30$ dB and the number of antennas equipped at each user is $P=8$, with different numbers of BS antennas.
Figure \ref{Fig004} shows that the achievable rate performance of the proposed digital ZF precoding increases with increasing numbers of antennas at the BS, despite the existence of CSI estimation errors.
This is mainly due to the fact that equipping more antennas at the BS and the users can lead to higher array gains.
It can be observed that the achievable rate of the proposed system scales with the number of BS antennas with a similar slope as the fully digital system, which shows the effectiveness of the proposed scheme for the hybrid system in exploiting the spatial degrees of freedom.
In Figure \ref{Fig005}, the SNR setup and $M=100$ are the same as Figure \ref{Fig004} but with different numbers of users antennas.
Similar phenomena are observed as in Figure \ref{Fig004}.
With the same value of $MP$ and identical normalized MSE $\delta^{2}$, it is shown in Figure \ref{Fig004} and Figure \ref{Fig005} that the achievable rate performance degradations (the gap between the dash lines and the solid lines) caused by CSI errors are identical.

\begin{figure}[t]
\centering\vspace{-0mm}
\begin{subfigure}{.5\textwidth}
  \includegraphics[width=3.5in]{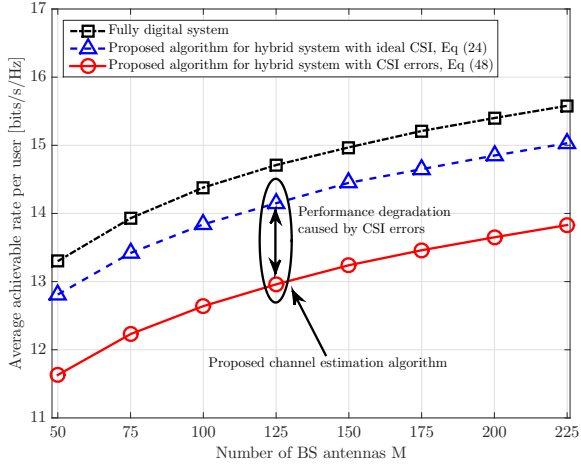}
  \caption{Average achievable rate versus BS antennas.}
  \label{Fig004}
  \end{subfigure}
\begin{subfigure}{.5\textwidth}
  \includegraphics[width=3.5in]{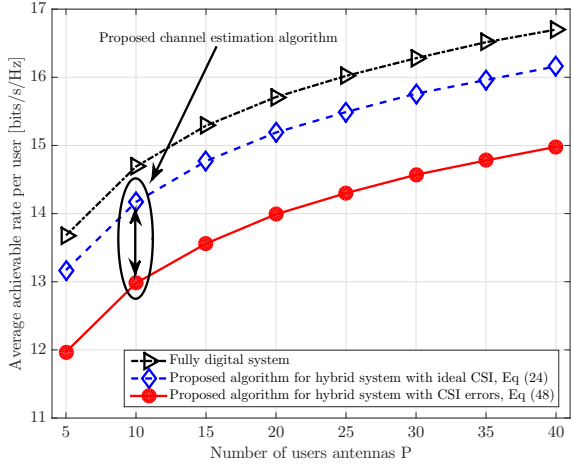}
  \caption{Average achievable rate versus users antennas.}
  \label{Fig005}
\end{subfigure}
\vspace*{-0mm}
\caption{(a) Average achievable rate (bits/s/Hz) versus different number of BS
antennas for different structures with $P=8$; (b) Average achievable
rate (bits/s/Hz) versus different number of users antennas for different structures with $M=100$. We set the number of users $N=8$, the
normalized CSI $\mathrm{MSE}=0.005$, $\mathrm{SNR}=30$ dB, and the Rician
K-factor of $2$ for both (a) and (b).}
\label{fig:Comp_versus_No_antennas}\vspace{-1mm}
\end{figure}

\vspace{+2mm}
\section{Conclusions}

In this paper, we proposed a low-complexity mmWave channel estimation algorithm exploiting the strongest AoA for the MU hybrid mmWave systems, which is applicable for both sparse and non-sparse mmWave channel environments.
The MSE performance of the proposed channel estimation was analyzed and verified via numerical simulation.
The achievable rate performance of designed analog beamforming and digital ZF precoding based on the proposed channel estimation scheme was derived and compared to that of the fully digital system.
The analytical and simulation results indicated that the proposed scheme can approach the rate performance achieved by the fully digital system with sufficiently large Rician K-factors.
By taking into consideration of the effects of random phase errors, transceiver beamforming errors, and CSI errors in the rate performance analysis, closed-form approximation of achievable rate in the high SNR regime was derived and verified via simulation.
Our results showed that the proposed scheme is robust against random phase errors and transceiver beamforming errors.
\vspace{-0mm}
\section*{Appendix}
\vspace{-0mm}
\subsection{Proof of Theorem 1}

\begin{equation}
\hspace{-2mm}
R_{\mathrm{HB}} =\mathrm{E}_{%
\mathbf{H}_{\mathrm{S}}}\left\{ \log _{2}\left[ 1+\left[ \mathrm{tr}\left[ (%
\mathbf{H}_{\mathrm{eq}}^{T}\mathbf{H}_{\mathrm{eq}}^{\ast })^{-1}\right] %
\right] ^{-1}\dfrac{E_{s}}{\sigma _{\mathrm{MS}}^{2}}\right] \right\} .
\label{Proof_1}
\end{equation}\vspace*{-0.0mm}%
First, we introduce some preliminaries. Since $\mathbf{H}_{\mathrm{eq}}^{T}\mathbf{H}_{\mathrm{eq}}^{\ast}$ is a positive definite Hermitian matrix, by eigenvalue decomposition, it can be decomposed as $\mathbf{H}_{\mathrm{eq}}^{T}\mathbf{H}_{\mathrm{eq}}^{\ast }=\mathbf{U\Lambda V}^{H}$, $\mathbf{\Lambda}\in\mathbb{C}^{N\times N}$ is the positive diagonal eigenvalue matrix, while $\mathbf{V}\in\mathbb{C}^{N\times N}$ and $\mathbf{U}\in\mathbb{C}^{N\times N}$ are unitary matrixes, $\mathbf{U=V}^{H}$.
The trace of the eigenvalues of $\mathbf{H}_{\mathrm{eq}}^{T}\mathbf{H}_{\mathrm{eq}}^{\ast}$ equals to the trace of matrix $\mathbf{\Lambda }$.
Then we can rewrite the power normalization factor in (\ref{Proof_1}) as\vspace*{-1.0mm}%
\begin{align}
\vspace*{-1mm}
\dfrac{N}{\mathrm{tr}\left[ (\mathbf{H}_{\mathrm{eq}}^{T}\mathbf{H}_{\mathrm{eq}}^{\ast })^{-1}\right] }&=N\left[ \mathrm{tr}\left[ \mathbf{U\Lambda U}^{H}\right] ^{-1}\right]^{-1}\label{Proof_2}\\
&=N\left[ \mathrm{tr}\left( \mathbf{\Lambda }^{-1}\right) \right] ^{-1}=\left[ \overset{N}{\underset{i=1}{\dsum }}\frac{1}{N}\lambda _{i}^{-1}\right]^{-1}, \vspace*{-1mm} \notag
\end{align}\vspace{-0.0mm}%
In addition, $f(x)=x^{-1}$, $x > 0$,  is a strictly decreasing convex function and exploiting the convexity, we have the following results \cite{book:infotheory}\vspace{-0mm}%
\begin{equation}
\vspace{-1.0mm}
\left[ \overset{N}{\underset{i=1}{\dsum }}\frac{1}{N}\lambda _{i}^{-1}\right]
^{-1}\leqslant \overset{N}{\underset{i=1}{\dsum }}\frac{1}{N}\left[ \left(
\lambda _{i}^{-1}\right) ^{-1}\right] =\overset{N}{\underset{i=1}{\dsum }}%
\frac{1}{N}\lambda _{i}.  \label{JIE}\vspace{-0.0mm}
\end{equation}\vspace{-0.0mm}%
Therefore, based on (\ref{Proof_2}) and (\ref{JIE}), we have the following inequality: \vspace{-0mm}%
\begin{align}
\vspace{-2.0mm}
\dfrac{1}{\mathrm{tr}\left[ \left( \mathbf{H}_{\mathrm{eq}}^{T}\mathbf{H}_{%
\mathrm{eq}}^{\ast }\right) ^{-1}\right] }\leqslant& \overset{N}{\underset{i=1%
}{\dsum }}\frac{1}{N^{2}}\lambda _{i}=\frac{1}{N^{2}}\overset{N}{\underset{%
i=1}{\dsum }}\lambda _{i}\notag\\
=&\frac{1}{N^{2}}\mathrm{tr}\left[ \mathbf{H}_{\mathrm{eq}}^{T}\mathbf{H}_{\mathrm{eq}}^{\ast }\right] .  \label{Proof_3} \vspace{-1.0mm}
\end{align}\vspace{-0.0mm}%
Based on (\ref{Proof_3}), Equation (\ref{Proof_1}) can be rewritten as\vspace{-1.0mm}%
\begin{align}
R_{\mathrm{HB}}\overset{(a)}{\leqslant }&\mathrm{E}_{\mathbf{H}_{\mathrm{S}%
}}\left\{ \log _{2}\left[ 1+\frac{1}{N^{2}}\mathrm{tr}\left[ \mathbf{H}_{%
\mathrm{eq}}^{T}\mathbf{H}_{\mathrm{eq}}^{\ast }\right] \dfrac{E_{s}}{\sigma
_{\mathrm{MS}}^{2}}\right] \right\} \notag\\
\overset{(b)}{\leqslant }&\log _{2}\left\{
1+\frac{1}{N^{2}}\mathrm{E}_{\mathbf{H}_{\mathrm{S}}}\left[ \mathrm{tr}%
\left( \mathbf{H}_{\mathrm{eq}}^{T}\mathbf{H}_{\mathrm{eq}}^{\ast }\right) %
\right] \dfrac{E_{s}}{\sigma _{\mathrm{MS}}^{2}}\right\}  \notag \\
=&\log _{2}\left\{ 1+\frac{1}{N^{2}}\dfrac{E_{s}}{\sigma _{\mathrm{MS}}^{2}}%
\left[ \left( \dfrac{\upsilon}{\upsilon+1}\right) MP\| \mathbf{F}_{\mathrm{RF}}^{H}\mathbf{F}_{\mathrm{RF}} \|_{\mathrm{F}}^{2} \right] \right. \notag \\
&\left. +\left( \dfrac{1}{%
\upsilon+1}\right)\dfrac{E_{s}}{\sigma _{\mathrm{MS}}^{2}} \right\}.
\label{Proof_HBUB}
\end{align}\vspace{-0.0mm}%
In $(a),$ we follow (\ref{Proof_3}) and in $(b),$ we adopt the Jensen's inequality. This completes the proof.

\vspace{-0mm}
\subsection{Proof of Theorem 2}
\vspace{-0mm}
The receive SINR of user $k$ is given by \vspace*{-0.0mm}%
\begin{equation}
\vspace*{-1mm}
\widetilde{\mathrm{SINR}}_{\mathrm{ZF}}^{k}=\frac{\widetilde{\beta }^{2}E_{s}%
}{\underset{\mathrm{Interference}\text{\ }\mathrm{term}\text{\ }\mathrm{due}\text{\ }\mathrm{to}\text{\ }\mathrm{errors}}{\underbrace{\widetilde{\beta }^{2}E_{s}\widehat{\mathbf{h}}_{%
\mathrm{eq,}k}^{T}\mathrm{E}_{{\Delta}\widehat{\mathbf{H}}_{\mathrm{eq}}}\left[ \Delta \widehat{\mathbf{W}}_{\mathrm{eq}%
}\Delta \widehat{\mathbf{W}}_{\mathrm{eq}}^{H}\right] \widehat{\mathbf{h}}_{%
\mathrm{eq,}k}^{\ast }}}+\sigma _{\mathrm{MS}}^{2}}.\vspace*{-0mm}
\end{equation}\vspace{-0.0mm}%
To express $\mathrm{E}_{{\Delta}\widehat{\mathbf{H}%
}_{\mathrm{eq}}}\left[ \Delta \widehat{\mathbf{W}}_{\mathrm{eq}}\Delta
\widehat{\mathbf{W}}_{\mathrm{eq}}^{H}\right]$, first we present the
normalized expression of\textbf{\ }$\Delta \widehat{\mathbf{W}}_{\mathrm{eq}%
} =$\vspace*{-1.0mm}
\begin{align}
\hspace*{-2.0mm}
\sqrt{1+\delta ^{2}}&(\widehat{\mathbf{H}}_{\mathrm{eq}}^{\ast}+\Delta \widehat{%
\mathbf{H}}_{\mathrm{eq}}^{\ast})\left[ (\widehat{\mathbf{H}}_{\mathrm{eq}%
}+\Delta \widehat{\mathbf{H}}_{\mathrm{eq}})^{T}(\widehat{\mathbf{H}}_{%
\mathrm{eq}}^{\ast}+\Delta \widehat{\mathbf{H}}_{\mathrm{eq}}^{\ast})\right] ^{-1}\notag\\%
&-\widehat{\mathbf{H}}_{\mathrm{eq}}^{\ast }\left( \widehat{\mathbf{H}}_{%
\mathrm{eq}}^{T}\widehat{\mathbf{H}}_{\mathrm{eq}}\right)^{-1},
\label{Eq_222}\vspace*{-1.0mm}
\end{align}%
where $\mathbf{K}=(\widehat{\mathbf{H}}_{\mathrm{eq}}^{T}\widehat{\mathbf{H}}%
_{\mathrm{eq}}^{\ast })$, \vspace*{+1mm}$\mathbf{D}=(\Delta \widehat{\mathbf{H}}_{\mathrm{%
eq}}^{T}\widehat{\mathbf{H}}_{\mathrm{eq}}^{\ast }+\widehat{\mathbf{H}}_{%
\mathrm{eq}}^{T}\Delta \widehat{\mathbf{H}}_{\mathrm{eq}}^{\ast }+\Delta
\widehat{\mathbf{H}}_{\mathrm{eq}}^{T}\Delta \widehat{\mathbf{H}}_{\mathrm{eq%
}}^{\ast }),$\vspace*{+1mm} and $\widehat{\mathbf{W}}_{\mathrm{eq}}=\widehat{%
\mathbf{H}}_{\mathrm{eq}}^{\ast }\mathbf{K}^{-1}.$ The matrix inversion
approximation is given by\vspace*{-0.0mm}
\begin{equation}
\vspace*{-1mm}
\mathbf{(K+D)}^{-1}\approx \left[ \mathbf{K}^{-1}-\mathbf{K}^{-1}\mathbf{DK}%
^{-1}\right] .\vspace*{-1mm}
\end{equation}\vspace*{-0.0mm}%
In this case, we re-express (\ref{Eq_222}) as\vspace*{-0.0mm}
\begin{align}
\vspace*{-1mm}
\Delta \widehat{\mathbf{W}}_{\mathrm{eq}}\approx &\sqrt{1+\delta ^{2}}\left(
\widehat{\mathbf{H}}_{\mathrm{eq}}^{\ast }+\Delta \widehat{\mathbf{H}}_{%
\mathrm{eq}}^{\ast }\right) \left( \mathbf{K}^{-1}-\mathbf{K}^{-1}\mathbf{DK}^{-1}\right) \notag \\
&-\widehat{\mathbf{H}}_{\mathrm{eq}}^{\ast }\mathbf{K}^{-1}.\vspace*{-1mm}
\end{align}\vspace*{-0mm}%
Finally, we have $E_{\mathrm{\Delta}\widehat{\mathrm{\mathbf{H}}}_{\mathrm{eq}%
}}\left[ \Delta \widehat{\mathbf{W}}_{\mathrm{eq}}\Delta \widehat{\mathbf{W}}%
_{\mathrm{eq}}^{H}\right] $
\begin{align}
\hspace{-2.0mm}
&\overset{\left( b\right) }{\approx } (\sqrt{1+\delta ^{2}}-1)^{2}\left[
\widehat{\mathbf{W}}_{\mathrm{eq}}\widehat{\mathbf{W}}_{\mathrm{eq}}^{H}%
\right] \notag\\
&+2\sqrt{1+\delta ^{2}}(1-\sqrt{1+\delta ^{2}})\delta ^{2}N\left[ \widehat{\mathbf{W}}_{\mathrm{eq}}\mathbf{K}^{-1}\widehat{\mathbf{%
W}}_{\mathrm{eq}}^{H}\right]  \notag \\
& +(2\sqrt{1+\delta ^{2}}-\sqrt{1+\delta ^{2}})\delta ^{2}\left[ \mathrm{tr}(%
\mathbf{K}^{-1})\widehat{\mathbf{W}}_{\mathrm{eq}}\widehat{\mathbf{W}}_{%
\mathrm{eq}}^{H}\right]  \label{Eq_30_0} \\
&\overset{(d)}{\approx } (\sqrt{1+\delta ^{2}}-1)^{2}\left[ \widehat{\mathbf{W}}_{%
\mathrm{eq}}\widehat{\mathbf{W}}_{\mathrm{eq}}^{H}\right]\notag\\
&+\frac{2N}{\xi MP}\sqrt{1+\delta ^{2}}(1-\sqrt{1+\delta ^{2}})\delta ^{2}\widehat{%
\mathbf{W}}_{\mathrm{eq}}\mathbf{G}_{\mathrm{L}}^{-1}\widehat{\mathbf{W}}_{%
\mathrm{eq}}^{H}  \notag \\
& +(2\sqrt{1+\delta ^{2}}-\sqrt{1+\delta ^{2}})\delta ^{2}\left[ \frac{N}{\xi MP}\widehat{\mathbf{W}}_{\mathrm{eq}}\mathbf{G}_{\mathrm{%
L}}^{-1}\widehat{\mathbf{W}}_{\mathrm{eq}}^{H}\right].  \label{Eq_30}\vspace{-1.0mm}
\end{align}\vspace{-0.0mm}%
In (b), we omit some negligibly small parts which neither dominate the performance nor scale with $M$.
In (d), while the number of antennas $M\rightarrow \infty $, $\mathbf{K}=\widehat{\mathbf{H}}_{\mathrm{eq}}^{T}\widehat{\mathbf{H}}_{\mathrm{eq}}^{\ast }\underset{M\rightarrow \infty }{\overset{a.s.}{\approx }} \xi MP\mathbf{G}_{\mathrm{L}}$ holds, where $\xi \in \left( 0,1\right]$.
We substitute (\ref{Eq_30_0}) into (\ref{SINR_with_error}) and substitute (\ref{Eq_30}) into (\ref{SINR_with_error}), the expressions (\ref{Theo_IP_RL}) and (\ref{Coro_RZFK}) come immediately after some straight forward mathematical manipulation.\vspace*{-0mm}


\end{document}